\documentclass{llncs}
\usepackage{url}
\usepackage{hyperref}
\usepackage{amsmath}
\usepackage{amssymb}
\usepackage{stmaryrd}

\usepackage{paralist}

\usepackage{graphicx}
\usepackage{todonotes}
\pagestyle{plain}
%

\title{Towards a constraint solver for proving confluence with invariant and equivalence of realistic CHR programs\thanks{This work is supported by The Danish Council for Independent Research, Natural Sciences, grant no.~DFF 4181-00442.}}
%
%
\author{Henning Christiansen \and Maja H.\ Kirkeby}
\authorrunning{M.~H.~Kirkeby \and H.~Christiansen}
%
\institute{Computer Science, Roskilde University, Denmark\\
\email{majaht@ruc.dk} and \email{henning@ruc.dk}}


\usepackage{stackengine}

\usepackage{tikz}
\usetikzlibrary{positioning,shapes,fit,arrows.meta,decorations}
\usetikzlibrary{shapes.geometric}
\usetikzlibrary{decorations.pathreplacing,decorations.pathmorphing}

%

\newcommand\ttleq{\raisebox{1pt}{$\mathrel{\stackunder[-1.5pt]{\texttt{<}}{\texttt{-}}}$}}

\def\true{\mathit{true}}

\def\failure{\mathit{failure}}
\def\error{\mathit{error}}
\def\exe{\mathit{Exe}}

\def\where{\mathrel{\mbox{\textsc{where}}}}

\def\capplus{\mathop{\setbox1=\hbox{$\cap$}\vbox to \ht1{\hbox{$\cap$}\vss\hbox to \wd1{\hfil\scriptsize$+$\hfil}}}}

\def\ourmapsto{\mapsto}
\def\ourmapsfrom{\mapsfrom}

\def\ourlongmapsto{\mathrel{\mbox{\hbox to 0.2ex{}\vrule width 0.15ex height 1ex\hskip -0.2ex$\longrightarrow$}}}

\def\mapstostar{\stackrel{*}\ourmapsto}
\def\mapsfromstar{\stackrel{*}\ourmapsfrom}

\def\supmeta{^{\mathit{meta}}}

\def\sublogic{_{\mathit{logic}}}
\def\subprolog{}
\def\Isubprolog{I}

\def\rightarrowsubprolog{\mathrel{\ourmapsto}}
\def\leftarrowsubprolog{\mathrel{\ourmapsfrom}}

\def\approxprolog{\sim}

\def\approxmetaprolog{\approx}

\def\Isubmetaprolog{\mathord{I\!\!I}}


\def\Mmetaprolog{\mathcal{M}}

\def\Iprolog{I}
\def\Imetaprolog{\Isubmetaprolog}

\def\cinv{\text{{\rm{\texttt{inv}}}}}
\def\cequiv{\text{{\rm{\texttt{equiv}}}}}
\def\cfreshvars{\text{{\rm{\texttt{freshvars}}}}}

\def\denotes#1{[\![#1]\!]}
\def\denotesgr#1{[\![#1]\!]^{\mathit{Gr}}}

\def\denotesmetaprolog#1{[\![#1]\!]}

\def\ourlongmapstometalogic{\mathrel{\mbox{\hbox to 0.2ex{}\vrule width 0.15ex height 1ex\hskip -0.2ex$\longrightarrow$}\sublogic\supmeta}}
\def\ourlongmapsfrommetalogic{\mathrel{\mbox{\hbox to 0.2ex{}$\longleftarrow$\hskip -0.3ex\vrule width 0.15ex height 1ex}\sublogic\supmeta}}

\def\ourlongmapstometaprolog{\Longmapsto}
\def\ourlongmapsfrommetaprolog{\Longmapsfrom}

\def\eyes{\hbox to 1.84ex{\vbox to 0pt{\vss\hbox to 1.84ex{\hfil\vbox to 0pt{\hbox{\textrm{.}}}\hfil\vbox to 0pt{\hbox{\textrm{.}}\vss}\hfil}}}}
\def\ssmile{\hbox to 1.84ex{\vbox to 0pt{\vss\hbox to 1.84ex{\hfil\vbox to 0pt{\vss\hbox{\tiny$\smile$}\vss}\hfil}}}}
\def\ffrown{\hbox to 1.84ex{\vbox to 0pt{\vss\hbox to 1.84ex{\hfil\vbox to 0pt{\vss\hbox{\tiny$\frown$}\vss}\hfil}}}}
\def\bvadr{\hbox to 1.84ex{\vbox to 0pt{\vss\hbox to 1.84ex{\hfil\vbox to 0pt{\vss\hbox{\tiny$\leftrightsquigarrow$}\vss}\hfil}}}}

\def\willsucced{\hbox{\hskip 0.161ex\vrule height 1.84ex\vbox to 1.84ex
        {\hrule width 1.84ex\vskip 0.391ex\eyes\vskip -1.886ex\ssmile\vss\hrule width 1.84ex}\vrule height 1.84ex}\hskip 0.18ex}
\def\willfail{\hbox{\hskip 0.161ex\vrule height 1.84ex\vbox to 1.84ex
        {\hrule width 1.84ex\vskip 0.391ex\eyes\vskip -1.886ex\ffrown\vss\hrule width 1.84ex}\vrule height 1.84ex}\hskip 0.18ex}
\def\willerror{\hbox{\hskip 0.161ex\vrule height 1.84ex\vbox to 1.84ex
        {\hrule width 1.84ex\vskip 0.391ex\eyes\vskip -1.886ex\bvadr\vss\hrule width 1.84ex}\vrule height 1.84ex}\hskip 0.18ex}

\def\namedrule#1#2#3#4#5{#1\colon #2 \text{\rm\texttt{\char92}}#3\, \text{\rm\texttt{<=>}} \,#4 \text{\rm\texttt{|}} #5}
\def\namedrulenoguard#1#2#3#4{#1\colon #2 \text{\rm\texttt{\char92}}#3\, \text{\rm\texttt{<=>}} #4}
\def\nonamerule#1#2#3#4{#1 \text{\rm\texttt{\char92}}#2\, \text{\rm\texttt{<=>}} \,#3 \text{\rm\texttt{|}} #4}
\def\namedsimprule#1#2#3#4{#1\colon #2  \text{\rm\texttt{<=>}} \,#3 \text{\rm\texttt{|}} #4}
\def\nonamesimprule#1#2#3{#1  \text{\rm\texttt{<=>}} \,#2 \text{\rm\texttt{|}} #3}
\def\nonamesimprulenoguard#1#2{#1  \text{\rm\texttt{<=>}} \,#2 }

\def\nonameproprule#1#2#3{#1  \text{\rm\texttt{==>}} \,#2 \text{\rm\texttt{|}} #3}


\begin{document}
\maketitle
\begin{abstract}
Confluence of a nondeterministic program ensures a functional input-output relation, freeing the programmer from considering the actual scheduling strategy, and allowing optimized and perhaps parallel implementations. The more general property of confluence modulo equivalence ensures that equivalent inputs are related to equivalent outputs, that need not be identical. Confluence under invariants is also considered. Constraint Handling Rules (CHR) is an important example of a rewrite based logic programming language, and we aim at a mechanizable method for proving confluence modulo equivalence of terminating programs. 
While earlier approaches to confluence for CHR programs concern an idealized logic subset, we refer to a semantics compatible with standard Prolog-based implementations. 
We specify a meta-level constraint language in which invariants and equivalences can be expressed and manipulated, extending our previous theoretical results towards a practical implementation.
\end{abstract}

\section{Introduction}
Confluence of a program consisting of rewrite rules means that its input-output relation is functional, even if the
selection of internal computation steps is nondeterministic. Thus the underlying implementation is free to choose
any optimal ordering of steps, perhaps in parallel, and the programmer do not need to care about -- or even know about --
the internal scheduling strategy.

Constraint Handling Rules (CHR) is a rewrite rule-based language~\cite{DBLP:conf/iclp/Fruhwirth93,fruehwirth-98,fru-chr-book-2009}. The property of confluence of CHR programs has been studied since the introduction of CHR, e.g.,~\cite{DBLP:conf/cp/Abdennadher97,DBLP:conf/cp/AbdennadherFM96,DBLP:journals/constraints/AbdennadherFM99,HenningMajaLOPSTR18,DBLP:conf/flops/GallF18}. Most of these confluence results are developed with respect to a logic-based semantics reflecting only the logical subset of CHR; this choice gave elegant confluence proofs. More recently confluence results are developed for a more realistic semantics~\cite{DBLP:conf/lopstr/ChristiansenK14,DBLP:journals/fac/ChristiansenK17};  realistic in the sense that it reflects the de-facto standard implementations of CHR upon Prolog, e.g.,~\cite{DBLP:journals/aai/HolzbaurF00a,SchrijversDemoen2004}, including a correct treatment of Prolog's non-logical devices (e.g., \texttt{var}/1, \texttt{is}/2) and run-time errors. Here we focus on this Prolog-based semantics.

While confluence means that the input-output relation is functional, the more general property confluence modulo equivalence means that the input-output relation may be seen as a function from equivalence classes of the input to equivalence classes of the output. In this case, we again obtain the free choice of computation order. 
Typical examples include redundant data structures (e.g., representing a set as a list in which the order is immaterial) and
algorithms searching for a single best solution among several (two solutions are considered equivalent when they have the same score), e.g., the Viterbi algorithm~\cite{DBLP:journals/fac/ChristiansenK17}.
\begin{example}[\cite{HenningMajaLOPSTR18,DBLP:journals/fac/ChristiansenK17,ChristiansenKirkebyLOPSTR18}]\label{ex:set}
The following non-confluent program is a typical example of programs using redundant data structures; it collects items into a set represented as a list.
$$\begin{array}{l}
\nonamesimprulenoguard{\texttt{set(L),}\, \texttt{item(X) }}{\texttt{ set([X|L])}}\\
\end{array}
$$ 
This rule will apply repeatedly, replacing constraints matched by the left hand-side with those indicated on the right. 
For instance, the input query
\begin{verbatim}
?- set([]), item(a), item(b).\end{verbatim}
may produce two different outputs: \texttt{set([a,b])} and \texttt{set([b,a])}, both representing the same set.
\end{example}
Often programs are developed with a particular set of initial queries in mind. This includes the \texttt{set}-program of Example~\ref{ex:set}, reflecting a tacitly assumed state invariant: only one \texttt{set}-constraint is allowed. 
 While confluence requires that all states, including states that were never intended to be reached, satisfy the confluence property, confluence under invariant only 
requires that states in the invariant satisfy the confluence property.
Thus when a program is confluent under invariant, it means that a part of the input-output relation is functional, namely that which is restricted to invariant states. 
Due to the non-logical built-in constraints of the Prolog-based semantics, the empty program is not confluent, see Example~\ref{ex:empty-is-non-confl}; thus no program is confluent under this semantics. We identify a basic invariant that ensures confluence (under invariant) of the empty program. 

Most confluence results of CHR, and those we present here, are based on
Newman's lemma~\cite{Newman42} and a similar lemma for confluence modulo equivalence by Huet~\cite{DBLP:journals/jacm/Huet80} (Section~\ref{sec:prelim}), and are only relevant for terminating programs.
Methods for proving termination of CHR programs have been considered by, e.g.,~\cite{DBLP:conf/compulog/Fruhwirth99,DBLP:conf/iclp/PilozziS09a,DBLP:conf/lopstr/PilozziS11}.
An example of a non-terminating CHR program is shown in Example~\ref{ex:gcd} below, which
we can show confluent under a suitable invariant that enforces termination.
\begin{example}\label{ex:gcd}
The following program implements a version of Euclid's algorithm for computing the greatest common divisor of two (or more) positive integers.  
$$
\begin{array}{l}
\namedrulenoguard{r_1}{\texttt{gcd(N)}}{\texttt{gcd(N)}}{\texttt{ true.}}\\
\namedrule{r_2}{\texttt{gcd(N)}}{\texttt{gcd(M)}}{\texttt{ N<M, L is M-N }}{\texttt{ gcd(L).}}
\end{array}
$$
These rules will apply repeatedly; the first rule removes duplicates and the second one considers two unequal integers and replaces the larger by their distance.
For example, the input query 
\begin{verbatim} 
 ?- gcd(49), gcd(63).\end{verbatim}
produces the output \texttt{gcd(7)}. In general, the program is a nonterminating since inputs with a non-positive number may make the program loop, e.g., the second rule applies to \texttt{gcd(0),} \texttt{gcd(1)} producing the same state. In addition, the program may produce run-time errors, for instance by \texttt{gcd(1),} \texttt{gcd(1/0)} (since \texttt{<}/2 in the Prolog-based implementation/semantics evaluates its arguments). 
\end{example}
We introduce a small but useful extension to this line of work; we propose the weaker property termination under invariant\footnote{A similar property has been studied by~\cite{Endrullis2010} in term rewriting
under the name of  ``local termination''.}
(i.e., not necessarily termination outside the invariant), which is sufficient when proving confluence (modulo equivalence). 
\begin{example}\label{ex:gcd:inv}(Example~\ref{ex:gcd} continued)
For the \texttt{gcd}-program, a relevant invariant would be one restricting the states to those containing only \texttt{gcd}-constraints with positive integer arguments; the program is terminating and confluent under this invariant, and moreover it excludes states that may cause run-time errors.
\end{example}
When the invariants remove the inputs that may cause undesired program behavior such as non-termination or encountering of run-time errors, the invariant restricted input-output function is not only partial, but total.

\subsection{Related Work}
Most earlier approaches worked only for an idealized set of logical built-ins  managed by a ``magic'' constraint solver that has no counterpart in standard CHR implementations, 
e.g.,~\cite{DBLP:conf/cp/Abdennadher97,DBLP:conf/cp/AbdennadherFM96,DBLP:journals/constraints/AbdennadherFM99,HenningMajaLOPSTR18,DBLP:conf/iclp/DuckSS07}.
It was suggested by~\cite{DBLP:conf/iclp/DuckSS07} to take invariants into account (under the name of observable confluence), staying within the first-order framework.
Due to the lack of a more expressive meta-level, including a formal representation of invariants, this approach explodes (as also noticed by its authors) into infinitely many proof cases for simple invariants such as groundness.
In~\cite{HenningMajaLOPSTR18}, we have shown how this problem can be eliminated for the mentioned logic-based
semantics, introducing a meta-level representation analogously to what we do in the present paper. 

Confluence modulo equivalence was introduced  for CHR by~\cite{DBLP:conf/lopstr/ChristiansenK14},
also arguing to use a Prolog-based semantics closer to current CHR implementations. 
An in-depth theoretical analysis of these issues is given by~\cite{DBLP:journals/fac/ChristiansenK17},
also suggesting to use a ground meta-level representation for invariants and equivalences.
An attempt to handle confluence modulo equivalence was made by~\cite{DBLP:conf/flops/GallF18} based
on the approach of~\cite{DBLP:conf/iclp/DuckSS07}, inheriting the mentioned explosion problem. 

\subsection{Contributions}
In this work we focus on the more realistic Prolog-based CHR semantics that is compatible with Prolog-based CHR implementations. We present essential steps towards a constraint solver for proving confluence modulo equivalence under invariants for CHR programs that are terminating (under invariant). 
%
We extend previous approaches introducing  a specific meta-logic constraint language in which we can refer to states as data objects and better reason with invariants and equivalences. We use the meta-language for specifying abstract transitions and joinability (modulo equivalence) proofs in relation to the Prolog-based semantics. 
\subsection{Overview}
Section~\ref{sec:prelim} provides the basic definitions and properties of transitions systems,
and Section~\ref{sec:Prolog-based-CHR} recalls the syntax and operational semantics for
CHR with correct handling of non-logical built-ins such as \texttt{var}/1, \texttt{nonvar}/1 and \texttt{is}/2.
Section~\ref{sec:metalevel} introduces the meta-level transition system for CHR, including
its meta-level constraint language, and Section~\ref{sec_confluence} shows our confluence results.
Section~\ref{sec:conclude} gives some concluding remark and plans for future work.

\section{Preliminaries}\label{sec:prelim}
We give the basic definitions in a compact form here; background and motivating examples may be found in~\cite{DBLP:journals/fac/ChristiansenK17}.
A \emph{transition system} is a pair $\langle A,\ourmapsto\rangle$, where $A$ is a set of \emph{states} and
${\ourmapsto}\subseteq A\times A$ is the \emph{transition relation};
$\mapstostar$ is the reflexive transitive closure of $\ourmapsto$.
An element of $s\in A$ is \emph{final} or \emph{normal form} if $\nexists s'\colon s \ourmapsto s'$.
The system is \emph{terminating} whenever every  transition sequence $s_1\ourmapsto s_2\ourmapsto \cdots$ is finite.
An \emph{invariant} $I$ is a subset of $A$ such that $s\in I \land s\ourmapsto s'\Rightarrow s'\in I$. Whenever $s\in I$, $s$ is an \emph{$I$-state}.
A \emph{state equivalence} is an equivalence relation over $A$, typically written as $\sim$;
the  set of equivalence classes given by $\sim$ is written $A/{\sim}$.
In the context of an invariant $I$, $\ourmapsto$ and $\sim$ are tacitly assumed to be restricted to $I$-states.

Two states $s_1,s_2$ are \emph{joinable (modulo $\sim$)} whenever there exists a
state $s_{12}$ (states $s'_1, s'_2$) such that
$s_1\mapstostar s_{12}\mapsfromstar s_2$
($s_1\mapstostar s'_1\sim s'_2\mapsfromstar s_2$).

A system is \emph{confluent} whenever, for any $s_0, s_1,s_2$ with  $s_1\mapsfromstar s_0 \mapstostar s_2$, that $s_1, s_2$ are joinable.
It is \emph{confluent modulo $\sim$} whenever, for any $s_0, s_0', s_1,s_2$ with  $s_1\mapsfromstar s_0\sim s_0'\mapstostar s_2$, that $s_1, s_2$ are joinable modulo $\sim$.

An \emph{$\alpha$-corner} is of the form $s_1\ourmapsfrom s_0 \ourmapsto s_2$; a \emph{$\beta$-corner} is of the form $s_1\sim s_0 \ourmapsto s_2$;  in both cases, the indicated relationships must hold.
A system is \emph{locally confluent (modulo $\sim$)} whenever all $\alpha$- ($\alpha$- and $\beta$-) corners are joinable (joinable modulo $\sim$).
The following well-known lemmas reduce proofs of confluence (modulo equivalence) for terminating systems to proofs of the simpler property of local confluence (modulo equivalence).

\begin{lemma}[Newman \cite{Newman42}]\label{lm:newman}
A terminating transition system is confluent if and only if
it is locally confluent.
\end{lemma}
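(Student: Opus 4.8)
The plan is to prove the two implications separately. The forward direction, that confluence implies local confluence, is immediate: an $\alpha$-corner $s_1 \ourmapsfrom s_0 \ourmapsto s_2$ is just a special case of a diverging pair $s_1 \mapsfromstar s_0 \mapstostar s_2$ in which each leg has length one, so confluence directly yields joinability of $s_1$ and $s_2$. This direction needs neither termination nor any induction. For the converse — termination together with local confluence implies confluence — I would argue by well-founded (Noetherian) induction. The key observation is that termination means $\ourmapsto$ admits no infinite forward chain, so the relation is well-founded and we may induct on states ordered by reachability, taking as induction hypothesis that every proper reduct of $s_0$ is itself confluent, i.e.\ any two of its reducts are joinable. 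Fixing a diverging pair $s_1 \mapsfromstar s_0 \mapstostar s_2$, I would first dispose of the degenerate cases where one leg has length zero, say $s_0 = s_1$, since then $s_1 = s_0 \mapstostar s_2$ makes $s_2$ itself a common reduct.

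In the main case both legs take at least one step, so I would factor them as $s_0 \ourmapsto a \mapstostar s_1$ and $s_0 \ourmapsto b \mapstostar s_2$. Now $a \ourmapsfrom s_0 \ourmapsto b$ is an $\alpha$-corner, so local confluence supplies a common reduct $c$ with $a \mapstostar c \mapsfromstar b$. The plan is then to \emph{tile} the diagram using the induction hypothesis twice. Since $s_0 \ourmapsto a$, the state $a$ is a proper reduct of $s_0$, so by the induction hypothesis $a$ is confluent; applying this to $a \mapstostar s_1$ and $a \mapstostar c$ yields a common reduct $d$ with $s_1 \mapstostar d \mapsfromstar c$. Then $b \mapstostar c \mapstostar d$ and $b \mapstostar s_2$, and since $b$ is likewise a proper reduct of $s_0$, its confluence (again the induction hypothesis) joins $s_2$ and $d$ at some $e$. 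Chaining gives $s_1 \mapstostar e \mapsfromstar s_2$, establishing joinability and closing the induction.

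The hard part — and the reason termination is indispensable — is exactly the two nested appeals to the induction hypothesis: local confluence by itself only closes the single corner at $a$ and $b$ into $c$, but the emergent divergences $s_1 \mapsfromstar a \mapstostar c$ and $s_2 \mapsfromstar b \mapstostar d$ may again be long, and there is no a priori bound on how much further rewriting is needed to close them. Well-founded induction is what licenses treating these as already-solved instances, since $a$ and $b$ lie strictly below $s_0$. The main bookkeeping obstacle is therefore to ensure that every invocation of the hypothesis is applied to a state that is genuinely a \emph{proper} reduct of $s_0$ (here $a$ and $b$, each reached in one step), so that the induction is grounded; the classical non-terminating counterexamples to ``local confluence implies confluence'' show that this grounding cannot be dispensed with.
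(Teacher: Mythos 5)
Your proof is correct. The paper itself gives no proof of this lemma --- it is stated with a citation to Newman~\cite{Newman42} and used as a black box --- so there is nothing internal to compare against; what you have written is the standard Noetherian-induction argument (essentially Huet's streamlined presentation of Newman's result): the trivial direction is handled as a one-step special case, and the converse tiles the diagram by closing the single corner $a \ourmapsfrom s_0 \ourmapsto b$ with local confluence and then discharging the two residual divergences via the induction hypothesis at the strictly smaller states $a$ and $b$. Your closing remark correctly identifies why termination is essential and where non-terminating counterexamples break the induction.
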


\begin{lemma}[Huet \cite{DBLP:journals/jacm/Huet80}]\label{lm:huet-mod-eq}
A terminating transition system is confluent modulo $\sim$ if and only if
it is locally confluent modulo $\sim$.
\end{lemma}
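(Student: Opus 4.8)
The forward implication I would dispose of immediately. Each $\alpha$-corner $s_1 \ourmapsfrom s_0 \ourmapsto s_2$ is a special instance of a peak $s_1 \mapsfromstar s_0 \sim s_0 \mapstostar s_2$ (taking the central equivalence to be identity), and each $\beta$-corner $s_1 \sim s_0 \ourmapsto s_2$ is an instance of $s_1 \mapsfromstar s_1 \sim s_0 \mapstostar s_2$ (taking the left chain empty); in both cases confluence modulo $\sim$ directly yields joinability modulo $\sim$. So the work is in the converse.

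For the converse I would argue by well-founded (Noetherian) induction, in the spirit of the proof of Newman's lemma (Lemma~\ref{lm:newman}) but carrying the equivalence through the diagram. Termination makes $\ourmapsto$ well-founded, so its transitive closure induces a well-founded order on states, and I take as induction measure the multiset $\{s_0, s_0'\}$ of the two sources of a peak $s_1 \mapsfromstar s_0 \sim s_0' \mapstostar s_2$, ordered by the multiset extension of that order. The induction hypothesis is the \emph{full} confluence-modulo-$\sim$ statement for every peak whose pair of sources is strictly smaller. The argument then splits on whether the two reduction chains are empty: if both are empty the claim collapses to $s_1 = s_0 \sim s_0' = s_2$; if exactly one is empty, the central $\beta$-corner condition applies to the first step of the nonempty side; and if both are nonempty I peel off the first step of each chain and use the $\alpha$- and $\beta$-corner conditions to close the innermost corner, producing a diagram of reductions interleaved with $\sim$-steps.

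The main obstacle — and what makes this genuinely harder than Newman's lemma — is the gluing step, because joinability modulo $\sim$ does not compose transitively. Closing the inner corners and invoking the induction hypothesis on the resulting sub-peaks yields a ``valley with $\sim$-bumps'' of the form $\cdots \mapsfromstar c \sim d \mapstostar \cdots$, which is not yet of the required shape $\cdot \mapstostar \cdot \sim \cdot \mapsfromstar \cdot$. To flatten such a bump I would re-apply the induction hypothesis to the sub-peak rooted at $c$ and $d$, which are proper descendants of the original sources and hence admissible under the measure. The delicate points to verify are that every such re-invocation sits at a strictly smaller multiset of sources, so the induction stays well-founded and terminates after finitely many flattenings, and that the $\beta$-corner hypothesis (not merely the $\alpha$-corner one) is available at precisely the places where the central equivalence must be crossed. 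This last point is exactly what separates local confluence modulo $\sim$ from ordinary local confluence and is what makes Huet's condition sufficient.
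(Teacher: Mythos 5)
The paper does not actually prove this lemma---it is imported verbatim from Huet~\cite{DBLP:journals/jacm/Huet80}---so your attempt can only be judged against Huet's original argument. Your forward direction is fine, and your plan for the converse (well-founded induction enabled by termination, $\alpha$-corners closing reduction peaks, $\beta$-corners closing the steps adjacent to the central equivalence, and the non-transitivity of joinability modulo $\sim$ as the central obstacle) correctly identifies all the ingredients and the crux of the difficulty.

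Nevertheless, the step you flag as ``delicate'' is a genuine gap, and it is exactly the hard part of the theorem. Your measure is the multiset $\{s_0,s_0'\}$ of the two sources of the \emph{outer} peak. That licenses each individual re-invocation of the induction hypothesis---every sub-peak you form does lie strictly below $\{s_0,s_0'\}$---but it does not bound \emph{how many} re-invocations are needed. Flattening one bump $c\sim d\mapstostar e$ into a valley $c\mapstostar u\sim v\mapsfromstar e$ generically creates a fresh bump where that valley meets the next $\sim$-step of the chain, and the source-multiset of the new bump is in general \emph{not} smaller than that of the bump just removed; both are merely below the original $\{s_0,s_0'\}$, which is perfectly compatible with an infinite cascade of flattenings. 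Worse, each flattening replaces a one-$\sim$ peak by a one-$\sim$ valley, so the number of $\sim$-crossings in your ``valley with $\sim$-bumps'' is preserved unless two crossings happen to become adjacent and merge by transitivity---which nothing in your argument forces---while the target shape has exactly one crossing. So ``terminates after finitely many flattenings'' is asserted, not proved, and does not follow from the admissibility of each single re-invocation. Huet's proof (and the standard modern treatments) circumvent this by a finer induction that measures the \emph{entire} conversion being transformed rather than only the outer peak, so that every local replacement strictly decreases a single global well-founded measure; that is the ingredient your sketch is missing.
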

In accordance with~\cite{DBLP:conf/iclp/DuckSS07},
these extends to \emph{observable} confluence (modulo $\sim$) in the context of an invariant, i.e.,
when $\ourmapsto$ (and $\sim$)  is restricted to $I$-states. Similarly, transitions, corners, etc.\ whose states are $I$-states are called \emph{observable}.

%

We suggest the following generalization: a transition system is \emph{terminating under $I$}, whenever every observable transition sequence is finite.
Lemmas~\ref{lm:newman} and~\ref{lm:huet-mod-eq} 
 generalize immediately
for termination under invariant as follows.

\begin{lemma}
An observably terminating transition system is observably confluent (modulo $\sim$) if and only if it is locally observably confluent (modulo $\sim$).
%
\end{lemma}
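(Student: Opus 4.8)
The plan is to reduce the statement to the already-established Lemmas~\ref{lm:newman} and~\ref{lm:huet-mod-eq} by observing that the invariant lets us carve out a genuine transition system living entirely on the $I$-states. Concretely, I would set $A_I = I$ and let $\ourmapsto_I$ be the restriction of $\ourmapsto$ to $I \times A$. The defining property of an invariant, $s\in I \land s\ourmapsto s'\Rightarrow s'\in I$, guarantees that every $\ourmapsto_I$-successor of an $I$-state is again an $I$-state, so $\langle A_I,\ourmapsto_I\rangle$ is closed and is itself a transition system in the sense of Section~\ref{sec:prelim}. Likewise, restricting the equivalence $\sim$ to $I$ yields an equivalence relation on $A_I$, matching the convention that $\ourmapsto$ and $\sim$ are tacitly restricted to $I$-states.

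The second step is to check that each notion in the statement coincides, for $\langle A_I,\ourmapsto_I\rangle$, with its observable counterpart for $\langle A,\ourmapsto\rangle$. Observable termination is by definition the finiteness of every $\ourmapsto_I$-sequence, i.e.\ termination of $\langle A_I,\ourmapsto_I\rangle$. An observable ($\alpha$- or $\beta$-) corner is exactly a corner of the restricted system, since all of its states are $I$-states; hence local observable confluence (modulo $\sim$) is precisely local confluence (modulo $\sim$) of $\langle A_I,\ourmapsto_I\rangle$. The only point needing care is joinability: a witness $s_1\mapstostar s_{12}\mapsfromstar s_2$ (respectively $s_1\mapstostar s_1'\sim s_2'\mapsfromstar s_2$) computed inside $\langle A_I,\ourmapsto_I\rangle$ uses only $\ourmapsto_I$-steps, so by the closure property every intermediate state, including $s_{12}$ (respectively $s_1'$ and $s_2'$), is an $I$-state; thus joinability in the restricted system agrees with observable joinability.

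With these identifications in place, the result follows by applying the existing biconditional lemmas to $\langle A_I,\ourmapsto_I\rangle$: Lemma~\ref{lm:newman} for the plain case and Lemma~\ref{lm:huet-mod-eq} for the modulo-$\sim$ case. In each direction, a (local) confluence property of the restricted system translates back verbatim into the corresponding observable property of $\langle A,\ourmapsto\rangle$. I expect the main obstacle to be exactly the bookkeeping around joinability witnesses sketched above: one must confirm that the witnessing states produced by the unrestricted lemmas never escape $I$. This is precisely where the invariant hypothesis is used, and it is the reason the generalization goes through ``immediately'' once the restricted system is set up correctly.
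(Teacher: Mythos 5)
Your proposal is correct and follows exactly the paper's own argument: the paper's proof simply says the result ``follows immediately from Lemmas~\ref{lm:newman} and~\ref{lm:huet-mod-eq} for an $I$-restricted system,'' and your write-up is a careful unfolding of precisely that reduction, including the one point (closure of joinability witnesses under the invariant) that makes the restriction legitimate.
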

\begin{proof}
Follows immediately from
Lemmas~\ref{lm:newman} and~\ref{lm:huet-mod-eq} for an $I$-restricted system. 
\end{proof}

%

\section{Prolog-based CHR}\label{sec:Prolog-based-CHR}
We recall the syntax and operational semantics of CHR with non-logical Prolog built-ins such as \texttt{var}/1, \texttt{nonvar}/1 and \texttt{is}/2.
These and any other built-ins are executed immediately  resulting
in a substitution (perhaps a failure or error substitution as defined below) that is applied to the current state,
analogous to how a typical CHR implementation upon Prolog works, e.g.,~\cite{DBLP:journals/aai/HolzbaurF00a,SchrijversDemoen2004}.

Standard first-order notions of variables, terms, predicates, atoms, etc.\ are assumed.
We extend substitutions to include special mappings between states.
A \emph{(proper) substitution} is a mapping from a finite set of variables to terms, e.g., the substitution $[x/t]$ replaces variable $x$ by term $t$. 
For a proper substitution $\sigma$  and expression $E$, $E\sigma$
(or $E\cdot\sigma$)
denotes the expression that arises when $\sigma$ is applied to $E$; composition of two proper substitutions $\sigma, \tau$ is denoted $\sigma\circ\tau$.
A \emph{substitution} $\sigma$ extends proper substitutions with two special substitutions $\failure$ and $\error$; applying special substitution $\failure$ (resp.\ $\error$) to a state $s\notin\{\error,\failure\}$ is defined as $(s)\failure=\failure$ (resp.\ $(s)\error=\error$).\footnote{Application of special substitutions to $\failure$ and $\error$ states is undefined.}
%

Two disjoint sets of \emph{constraint predicates}
are assumed, namely  \emph{user constraint} predicates and \emph{built-in constraint} predicates.
The meaning of built-in predicates is given by a function $\exe$ that maps them into substitutions. The set of built-in predicates may vary, see~\cite{DBLP:conf/lopstr/ChristiansenK14,DBLP:journals/fac/ChristiansenK17}, but we assume always \texttt{=}/2 with 
$\exe(t_1\texttt{=}t_2)$ being a most general unifier of $t_1,t_2$ if one exists, and $\failure$ otherwise.  
$\exe$ is extended to sequences of built-ins as follows, which is not commutative.
$$
\exe((b_1,b_2)) =
\begin{cases}
\exe(b_1) & \text{when $\exe(b_1)\in\{\failure,\error\}$},\\
\exe(b_2\,\exe(b_1)) & \text{when otherwise $\exe(b_2\,\exe(b_1))$}\\
& \text{~~~~~~~~~~~~~~~~~~~$\in\{\failure,\error\}$},\\
 \exe(b_1)\circ\exe(b_2\,\exe(b_1)) & \text{otherwise\label{text-def:exe}}
\end{cases}
$$
Thus the special substitutions are absorbing in the sense that once  $\failure$ or $\error$ has been detected,
this cannot be changed.

\begin{example}\label{ex:is=builtins}
We would expect $\exe(\texttt{nonvar(a)})=\emptyset$ and  $\exe(\texttt{nonvar(X)})=\failure$.
When $\exe$ produces the $\error$ substitution, it  indicates that the built-in cannot
evaluate due to insufficient instantiation or anomalies such as division by zero.
Assuming that $\exe$ models the  \texttt{is}/2 predicate  as it is defined in Prolog, we get the following.
$$
\exe((\texttt{X}\; \texttt{is}\, \texttt{Y+1}, \texttt{Y=2}))  =  \error
\qquad
\exe((\texttt{Y=2},\, \texttt{X}\; \texttt{is} \;\texttt{Y+1}))  =  \{\texttt{Y}/\texttt{2},\texttt{X}/\texttt{3}\}
$$
\end{example}

\subsection{Syntax}
We use the generalized simpagation form~\cite{fru-chr-book-2009} to capture all rules of CHR.
A \emph{rule} is a structure of the form
$\nonamerule{H_1}{H_2}GC$, where 
$H_1 \texttt{\char92}H_2$ is the \emph{head} of the rule,  where
 $H_1$ and $H_2$ are sequences, not both empty, of user constraints,
$G$ is the \emph{guard} which is a sequence of built-in constraints, 
and $C$ is the \emph{body} which is a sequence of constraints of either sort.
When $H_1$ is empty, the rule is a \emph{simplification}, which may be written
$\nonamesimprule{H_2}G C$;
when $H_2$ is empty, it is a \emph{propagation}, which may be written $\nonameproprule{H_1}G C$;
any other rule is a \emph{simpagation}; when $G=\true$,  $(G \texttt{|})$ may be left out. 
The \emph{head variables} of a rule are those appearing in the head, any other variable is \emph{local}.
A \emph{pre-application} of a rule $r = (\nonamerule{H_1}{H_2}GC)$ is of the form
$(\nonamerule{H_1'}{H_2'}{G'}{C'})\sigma$
where 
$r' = (\nonamerule{H_1'}{H_2'}{G'}{C'})$ is a variant with fresh variables of $r$
and $\sigma$ is a substitution to the head variables of $r'$, where, for no variable $x$, $x\sigma$ contains a local variable of $r'$. The notions of head variables and local variables extends naturally to pre-applications. 

\subsection{Operational Semantics}
A \emph{state representation (s.repr.)} is either a multiset of constraints (a \emph{proper} state) or one of $\failure$ or $\error$; applying the $\failure$ ($\error$) substitution to a proper s.repr.\ yields the $\failure$ ($\error$) s.repr.
A \emph{state} is an equivalence class of s.repr.s under variable renaming; $\failure$ or $\error$ identify their own classes.

A \emph{rule application} wrt.\ a proper state repr.\ $S$
is a pre-application $H_1 \texttt{\char92}H_2\;\texttt{<=>} \;G \texttt{|} C$ 
such that $\exe(G)$ is a proper substitution 
that does not instantiate the pre-application's head variables, i.e., $H_i = H_i\exe(G)$ for $i\in \{1,2\}$.

There are two sorts of transition steps, \emph{by rule application} and \emph{by built-in}.
\begin{eqnarray*}
H_1\uplus H_2\uplus S & \;\rightarrowsubprolog\; &  (H_1\uplus C\uplus S) \,\exe(G)\\
&& \qquad \text{when there exists a rule application $H_1 \texttt{\char92}H_2 \texttt{<=>} G \texttt{|} C$} \\
 \{b\}\uplus S &  \;\rightarrowsubprolog\; &   S\,\exe(b)\qquad \text{for a built-in $b$ constraint}
\end{eqnarray*}
Notice that for a built-in step, the resulting state may be $\failure$ or $\error$.

\subsection{Invariants and Confluence}
CHR programs without invariants are often not confluent. 
\begin{example}\label{ex:zigzagPlain}
We consider the CHR program consisting of the following four rules.
\begin{center}
\indent\begin{tabular}{ccl}
$r_1$:\quad\texttt{p(X)} \texttt{<=>} \texttt{q(X)} &\qquad\qquad &
   $r_3$:\quad\texttt{q(X)} \texttt{<=>} \texttt{X>0 |} \texttt{r(X)} \\
$r_2$:\quad\texttt{p(X)} \texttt{<=>} \texttt{r(X)} &\qquad\qquad &
   $r_4$:\quad\texttt{r(X)} \texttt{<=>} \texttt{X$\ttleq$0 |} \texttt{q(X)}
\end{tabular}
\end{center}
It is not confluent since the corner
$\texttt{q(X)}\,{\mapsfrom}\,\texttt{p(X)}\,{\ourmapsto}\,  \texttt{r(X)}$
is not joinable. 
However, adding the invariant ``reachable from an initial state \texttt{p($n$)} where $n$ is a number''
makes the program observably confluent, as shown in Example~\ref{ex:zigzagPlain-obscorners} below.
\end{example}
In fact, depending on which built-ins that are defined in the semantics, even the empty-program may not be confluent. 
\begin{example}\label{ex:empty-is-non-confl}
Assuming that the semantics includes Prolog-like built-ins \texttt{is}/2 and \texttt{=}/2, even the empty program is non-confluent; e.g., the corner $( \error \leftarrowsubprolog \{ \texttt{X} \; \texttt{is}\, \texttt{Y+1}, \texttt{Y=2} \} \rightarrowsubprolog \{\texttt{X} \; \texttt{is}\, \texttt{2+1}\})$ is not joinable.
\end{example}
These examples signify the need for considering observable confluence for Prolog-based CHR, rather than classical confluence.

\section{A meta-level transition system for CHR}\label{sec:metalevel}
We introduce a transition system whose individual states each cover typically infinitely many
CHR states, and such that each transition (sequence) analogously covers typically infinitely many
transition (sequence)s.
In this system, we can formulate critical corners, which, if shown joinable, implies joinability of all covered CHR corners.
Thus a proof of local confluence of a given CHR program may be reduced to a finite number of cases.
The difference between this and earlier approaches
is that we step up to a meta-level representation in which we can represent invariant and state equivalence
statements. 

We assume a specific CHR program together with an $\exe$ function
giving meaning to its built-ins plus  invariant $\Isubprolog$ and state equivalence $\approxprolog$.
By the \emph{object-level system}, we refer to the transition system thus induced by the operational semantics
for CHR given above.

Any object term, formula, etc.~is \emph{named} by a ground \emph{meta-level term}.
Variables are named by special constants, say \texttt{X} by \texttt{'X'},
and any other symbol by a function symbol written the same way;
e.g., the non-ground object level atom \texttt{p(A)} is named by the ground meta-level term \texttt{p('A')}.
For any such ground meta-level term $mt$, we indicate the object it names as $\denotesgr{mt}$. For example,
  $\denotesgr{\texttt{p('A')}}= \texttt{p(A)}$ and
  $\denotesgr{\texttt{p('A')}\land\texttt{'A'>2}} = (\texttt{p(A)}\land\texttt{A>2})$.

To describe meta-level states, we assume a 
constructor, $\langle-,-\rangle$, the \emph{state (representation) constructor}.
Intuitively, when we indicate a meta-level state as $\langle S,B\rangle$, $S$ will represent the multiset of CHR constraint in the covered states, and $B$  is not intended to represent a component of these
states, but is a sequence of built-ins to be executed and applied to $S$ (in suitable way) to form CHR states.

For a given object entity $e$, we define its \emph{lifting}\label{inline:lifting}
 to the meta-level by
1) selecting a meta-level term that names $e$, and 2) replacing variable names
in it consistently by fresh meta-level variables.
For example, $\texttt{p(X)}\land\texttt{X>2}$ is lifted to $\texttt{p($x$)}\land\texttt{$x$>2}$, where \texttt{X} and $x$
are object variable and, resp., meta-variable. We will refer to a lifted $n$-ary built-in constraint $\texttt{b}(x_1, \ldots, x_n)$ where $x_1, \ldots, x_n$ are (unique) meta-variables as a \emph{built-in template}.
By virtue of this overloaded syntax, we may often read such an entity $e$ (implicitly) as its lifting; and exception is states which relate differently as indicated by the definition of the covering function. 

To characterize sets of object terms, formulas, etc.\ with a single meta-level term, we assume a collection of logic \emph{meta-level constraints} whose meanings are given by a first-order theory $\Mmetaprolog$.
\begin{definition}\label{def:meta-prolog-theory} 
The theory $\Mmetaprolog$ includes at least the following constraints.
\begin{itemize}
  \item {\rm\texttt=}\emph{/2} with its usual meaning of syntactic identity,
  \item \emph{Type constraints} {\rm\texttt{type}}\emph{/2}. For example {\rm\texttt{type(var,$x$)}} is  true in $\Mmetaprolog$ whenever $x$ is the name of an object level variable; {\rm\texttt{var}} is an example of a \emph{type}, and we introduce more types below when we need them. 
  \item \emph{Modal constraints} $\willsucced F$, $\willfail F$, and $\willerror F$
     are defined to be true in $\Mmetaprolog$ whenever $\exe(\denotesgr F)$ returns a proper substitution, a failure substitution and an error substitution, respectively.
  \item The \emph{extended modal constraint} $\willsucced_H F$ is true in $\Mmetaprolog$ whenever $\willsucced F$ is true and $\denotesgr{H}=\denotesgr{H} \exe(\denotesgr F)$ holds (the substitution does not bind variables in $H$).
  \item {\rm\texttt{freshVars($L$,$T$)}}  is true in $\Mmetaprolog$ whenever $L$ is a list of all different variables names, none of which occur in the term $T$; {\rm\texttt{freshVars($L_1$,$L_2$,$T$)}} ab\-bre\-vi\-a\-tes
  {\rm\texttt{freshVars($L_{12}$,$T$)}} where $L_{12}$ is the concatenation of $L_1$ and $L_2$. 
\end{itemize}
\end{definition}
Substitutions are defined as usual at the meta-level, and we denote the set of grounding substitutions that satisfy a 
meta-level constraints as by 
$
[M] = \{ \sigma \mid \Mmetaprolog\models M\sigma\}.
$
A \emph{constrained meta-level term} is of the form
$mt\where M$
where $mt$ is a meta-level term and $M$ a conjunction of meta-level constraints. Such constrained meta-level terms may describe infinite sets of object-level entities.
The \emph{covering function}
$\denotesmetaprolog-$ 
from constrained meta-level terms into sets of object level notions is defined as 
$\denotesmetaprolog{mt \where M}=\ \{\denotes{mt}^\sigma\mid \sigma\in [M]\}$
where the notation $\denotes{-}^\sigma$ is defined as follows.
\begin{eqnarray*}
\denotes{\langle S,B\rangle}^\sigma&\quad=\quad & \denotes{S}^\sigma \exe(\denotes{B}^\sigma)\\
\denotes{mt}^\sigma&\quad=\quad& \denotesgr{mt\,\sigma},\quad\mbox{when $mt$ does not contain $\langle-,-\rangle$}\\
\denotes{f(mt_1,\ldots,mt_n)}^\sigma &\quad=\quad& f(\denotes{mt_1}^\sigma,\ldots,\denotes{mt_n}^\sigma),\quad\mbox{otherwise}.
\end{eqnarray*}
A constrained meta-level term $T$ is \emph{inconsistent} whenever $\denotesmetaprolog{T}=\emptyset$;
otherwise, it is \emph{consistent}.

\subsection{Meta-level states}
A \emph{meta-level state representation} is a constrained meta-level term $\Sigma$ 
which $\denotesmetaprolog{\Sigma}$ is a set of object level state representations.
Two meta-level state representations $\Sigma_1$ and $\Sigma_2$ are \emph{variants} whenever
$\denotesmetaprolog{\Sigma_1}=\denotesmetaprolog{\Sigma_2}$.
A \emph{meta-level state} is an equivalence class of meta-level state representations under the variant relationship.
Analogous to the object level, we typically indicate a given meta-level state by one of its
meta-level state representations.
 \begin{example}\label{ex:coveredM}
Consider the following meta-level state representation.
$$SR_1 \;\; = \;\; \bigl(\langle \{\texttt{p($x$)}\},\texttt{$x$=$c$}\rangle \where \texttt{type(var,$x$)}{\land} \texttt{type(const,$c$)}\bigr)
$$
It covers all CHR states $ \{\texttt{p($c$)}\}$ in which $c$ is a constant.
We will use below, the property that we can add consistent meta-level constraints on variables not mentioned elsewhere
without changing the set of covered CHR states. For example, $\denotes{SR_1}=\denotes{SR_2}$. i.e.,
$SR_1$ and $SR_2$ are variants, where
$$SR_2 \;\; = \;\; \bigl(\langle \{\texttt{p($x$)}\},\texttt{$x$=$c$}\rangle \where \texttt{type(var,$x$)}{\land} \texttt{type(const,$c$)}{\land} \texttt{type(int,$i$)}\bigr).
$$
\end{example}
A consistent meta-level state $\Sigma$ is called 
   \emph{proper} whenever every object level state in $\denotesmetaprolog\Sigma$ is proper, 
  a \emph{failed} (an \emph{error}) state whenever $\denotesmetaprolog\Sigma=\{\failure\}$ ($=\{\error\})$, and
  \emph{mixed} whenever  $\denotesmetaprolog\Sigma$ contains two or more of 
  $\failure$, $\error$
  or a proper state.  
  
A consistent state repr.\ $(\langle S,B \rangle\where M)$ is \emph{reduced} if $B=\true$,
and it is \emph{reducible to} a state repr.\ $(\langle S',\true\rangle\where M)$ 
whenever $\denotesmetaprolog{(\langle S,B \rangle, \;\langle S',\true\rangle) \break \where M}$ is the identity relation.
Notice that this property is stronger than variance introduced above,
as the meta-level constraints are the same in both state repr.s and common meta-variables in the two
are instantiated simultaneously. 

\subsection{Operational semantics}
A \emph{meta-level rule application} wrt.\ a proper and reduced meta-level state $\langle H_1\uplus H_2 \uplus S, \true \rangle \where M$ is a lifted pre-application 
$\nonamerule {H_1}{H_2}GC$ with local variables $L$ and head variables $H$, such that $$\Mmetaprolog\;\models\; M\;\rightarrow\; \willsucced_H G \land \texttt{freshVars($L$,$H_1\uplus H_2 \uplus S$)}.$$

There are two sorts of transitions from proper and reduced meta-level state representations:  
\emph{By rule application}
$$ \Sigma \;\;\;\;\ourlongmapstometaprolog\;\;\;\; \langle H_1\uplus C \uplus S, G\rangle \where M $$
whenever $\Sigma$ is reducible to $\langle H_1\uplus H_2 \uplus S,\true\rangle\where M$, whenever there exists a meta-level rule application 
$\nonamerule {H_1}{H_2}GC$. 
 \emph{By built-in}
 $$ \Sigma \;\;\;\;\ourlongmapstometaprolog\;\;\;\; \langle S,b\rangle\where M $$
whenever $\Sigma$ is reducible to $\langle S\uplus \{b\},\true\rangle\where M.$
Notice for a built-in transition, that the resulting state may be  mixed.
\begin{example}
Consider the rule $\namedsimprule{r_3}{\texttt{q(X)}}{\texttt{X>0}}{\texttt{r(X)}}$ extracted from Example~\ref{ex:zigzagPlain}
below.
Read as a meta-level application instance, it gives rise to the following transition.
$$
\bigl(\langle\{\texttt{q($x$)},\texttt{s($x$)}\},\true\rangle\ourlongmapstometaprolog \langle\{\texttt{r($x$)},\texttt{s($x$)}\},\texttt{$x$>0}\rangle\bigr)
\where \willsucced_{x} (\texttt{$x$>0})
$$
The \texttt{freshVars}/2 
constraint is not needed as there are no local variables.
\end{example}

\begin{lemma}\label{lm:cover2}
Let $\Sigma$ be a proper and reducible s.repr. 
If there is a meta-level transition $\Sigma \Mapsto \Sigma'$ then any element $S \rightarrowsubprolog S'$ in $\denotesmetaprolog{\Sigma \Mapsto \Sigma'}$ is a CHR transition. 
\end{lemma}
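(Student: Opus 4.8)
The plan is to unfold both the meta-level transition relation and the covering function, and to show that every element of $\denotesmetaprolog{\Sigma \Mapsto \Sigma'}$ matches one of the two object-level transition schemas (by rule application, or by built-in). I would proceed by cases on which of the two meta-level transition rules produced $\Sigma \Mapsto \Sigma'$, since the definition of $\Mapsto$ splits exactly this way. Throughout, I would fix a grounding substitution $\sigma \in [M]$ and track the single object-level pair $\denotes{\Sigma}^\sigma \rightarrowsubprolog \denotes{\Sigma'}^\sigma$ it determines; establishing the claim for an arbitrary such $\sigma$ suffices, since $\denotesmetaprolog{\cdot}$ ranges over all of $[M]$.

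First I would handle the \emph{by rule application} case. Here $\Sigma$ is reducible to $\langle H_1 \uplus H_2 \uplus S, \true\rangle \where M$ and $\Sigma' = \langle H_1 \uplus C \uplus S, G\rangle \where M$, with a lifted pre-application $\nonamerule {H_1}{H_2}GC$ satisfying $\Mmetaprolog \models M \rightarrow \willsucced_H G \land \texttt{freshVars($L$,$H_1\uplus H_2 \uplus S$)}$. Applying $\sigma$ and using the reducibility hypothesis, the source object state is $\denotes{H_1 \uplus H_2 \uplus S}^\sigma$ (a proper multiset, since $\Sigma$ is proper), and by the definition of $\denotes{\langle-,-\rangle}^\sigma$ the target is $\denotes{H_1 \uplus C \uplus S}^\sigma \cdot \exe(\denotes{G}^\sigma)$. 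The key point is that the semantic entailment, instantiated at $\sigma$, guarantees that $\denotesgr{G\sigma}$ satisfies $\willsucced_H$, i.e.\ $\exe(\denotesgr{G\sigma})$ is a proper substitution that does not instantiate the head variables; this is exactly the side condition in the object-level definition of a \emph{rule application}, so $\denotesgr{(\nonamerule {H_1}{H_2}GC)\sigma}$ is a genuine object-level rule application, and the pair is an instance of the first object-level transition schema. I would note in passing that the \texttt{freshVars} conjunct ensures the local variables of the lifted pre-application remain properly fresh after grounding, matching the freshness requirement on object-level pre-applications.

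The \emph{by built-in} case is the easier one: $\Sigma$ is reducible to $\langle S \uplus \{b\}, \true\rangle \where M$ and $\Sigma' = \langle S, b\rangle \where M$, so at $\sigma$ the source is $\denotes{S \uplus \{b\}}^\sigma$ and the target is $\denotes{S}^\sigma \cdot \exe(\denotes{b}^\sigma)$, which is precisely the second object-level schema (and may legitimately yield $\failure$ or $\error$, consistent with the stated semantics). The main obstacle, I expect, is the careful bookkeeping around \emph{reducibility} rather than mere variance: because $\Sigma$ is only reducible to the displayed reduced form, I must invoke the definition of \emph{reducible to} to argue that $\denotes{\Sigma}^\sigma$ and $\denotes{\langle H_1 \uplus H_2 \uplus S, \true\rangle}^\sigma$ denote the \emph{same} object state under the \emph{same} $\sigma$, which is exactly what the ``identity relation'' clause in that definition delivers (common meta-variables instantiated simultaneously). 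Once reducibility is used to replace $\Sigma$ by its reduced form without disturbing $\sigma$, the two cases above close the proof; the remaining steps are routine unfoldings of $\denotes{-}^\sigma$ and $\exe$.
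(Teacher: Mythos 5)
Your proposal is correct and follows essentially the same route as the paper's proof: fix an arbitrary $\sigma\in[M]$, split on the two kinds of meta-level transitions, use reducibility to pass to the reduced form $\langle H_1\uplus H_2\uplus S,\true\rangle\where M$ (resp.\ $\langle S\uplus\{b\},\true\rangle\where M$), and instantiate the entailment $M\rightarrow\willsucced_H G\land\texttt{freshVars}(\ldots)$ at $\sigma$ to recover the object-level side conditions. Your explicit remark that reducibility (rather than mere variance) is what licenses replacing $\Sigma$ by its reduced form under the \emph{same} $\sigma$ is exactly the point the paper also relies on, just stated more carefully.
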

\begin{proof} 
Let $(\langle A,B\rangle
 \;\Mapsto\;   \langle A', B'\rangle) \where M$ be a meta-level transition and $\sigma$ an arbitrary substitution in $[M]$. We have to show that 
 $(\denotes{\langle A,B\rangle}^\sigma
 \rightarrowsubprolog  \denotes{\langle A', B'\rangle}^\sigma)$ =
 $  
 (\denotesgr{A\sigma}\exe(\denotesgr{B\sigma})
 \;\rightarrowsubprolog\; \denotesgr{(A'\sigma)}\exe(\denotesgr{B'\sigma}))$
 is a transition in the object system. 
 We consider the two cases, by rule application and by built-in.  

\emph{By rule application}. By def.\ of meta-level transition,  $(\langle A,B\rangle \where M)$ is proper and reducible to a form $\left( \langle H_1\uplus H_2 \uplus S,\true\rangle\where M \right)$, obtaining that
$\denotesgr{H_1\sigma\uplus H_2\sigma \uplus S\sigma}=\denotesgr{A\sigma} \exe(\denotesgr{B\sigma})$ is proper and $\denotesgr{(A'\sigma)}\exe(\denotesgr{B'\sigma})$
$= \denotesgr{H_1\sigma\uplus C \uplus S\sigma}\exe(\denotesgr{G\sigma})$. There exists a meta-level rule application, i.e., 
 a lifted pre-application $\nonamerule {H_1}{H_2}GC$ with local variables $L$ and head variables $H$, such that $\Mmetaprolog\models M\rightarrow \willsucced_H G \land \texttt{freshVars($L$,$H_1\uplus H_2 \uplus S$)}$. Thus $\exe(\denotesgr{G\sigma})$ is a proper substitution that does not instantiate head variables $H$, and since these conditions are a direct formalization of the requirements for object level transition, $\denotesgr{H_1\sigma \text{\rm\texttt{\char92}}H_2\sigma\;\text{\rm\texttt{<=>}} \;G\sigma \text{\rm\texttt{|}} C\sigma}$ is an object level application instance, which leads to the conclusion. 

 \emph{By built-in}. 
 By def.\ of meta-level transition,  $(\langle A,B\rangle \where M)$ is proper and reducible to a form $\left(S\uplus\{b\},\true\rangle\where M \right)$ and 
 $\langle A', B'\rangle$ is the same as $\langle S, b\rangle$.
By definition of ``reducible'', $\denotesgr{A\sigma} \exe(\denotesgr{B\sigma}) = \denotesgr{S\sigma \uplus \{b\sigma\}}=\denotesgr{ S\sigma} \uplus \denotesgr{b\sigma}$ which is then a proper state, thus,
 $(\denotesgr{ S\sigma} \uplus \denotesgr{b\sigma}) \rightarrowsubprolog   \denotesgr{S\sigma}\exe(\denotesgr{b\sigma})=\denotesgr{A'\sigma}\exe(\denotesgr{B\sigma})$ concludes the proof.
\qed\end{proof}

\subsection{Invariants and Equivalences}
\emph{Meta-level invariants} $\Imetaprolog$ and \emph{equivalences} $\approxmetaprolog$ are defined as follows.
\begin{itemize}
  \item $\Imetaprolog(S)$ whenever $I\subprolog(s)$  for all $s\in\denotesmetaprolog S$.
  \item $S_1 \approxmetaprolog S_2$ whenever $s_1 \approxprolog s_2$  for all $(s_1,s_2)\in \denotesmetaprolog{(S_1,S_2)}$. 
\end{itemize}
The following properties that $\Imetaprolog$ is in fact an invariant and $\approxmetaprolog$ is  in fact an equivalence relation are direct consequences of the above definition and Lemma~\ref{lm:cover2}.
\begin{proposition}\label{prop:invmeta}
If $\Imetaprolog(\Sigma_1)$ and $\Sigma_1  \Mapsto \Sigma_2$ then $\Imetaprolog(\Sigma_2)$.
\end{proposition}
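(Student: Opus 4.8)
The plan is to reduce the meta-level statement to the object level through the covering function and Lemma~\ref{lm:cover2}, and then to close the argument by the defining property of an object-level invariant. First I would unfold the goal using the definition of $\Imetaprolog$: establishing $\Imetaprolog(\Sigma_2)$ is the same as showing $I\subprolog(s_2)$ for every object state $s_2\in\denotesmetaprolog{\Sigma_2}$. So I fix an arbitrary such $s_2$ and seek a matching object-level predecessor inside $\denotesmetaprolog{\Sigma_1}$.

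The crucial structural fact is that the meta-level transition rules force $\Sigma_2$ and the reduced form of $\Sigma_1$ to carry the same meta-level constraint $M$. Writing $\Sigma_2=\langle A',B'\rangle\where M$ and the reduced form of $\Sigma_1$ as $\langle A,\true\rangle\where M$, the definition of $\denotesmetaprolog{-}$ gives that every $s_2\in\denotesmetaprolog{\Sigma_2}$ has the shape $s_2=\denotes{\langle A',B'\rangle}^\sigma$ for some grounding $\sigma\in[M]$. For that same $\sigma$ I set $s_1=\denotes{\langle A,\true\rangle}^\sigma$, so that $s_1\in\denotesmetaprolog{\Sigma_1}$ (using that reducibility makes the coverings of $\Sigma_1$ and its reduced form agree pointwise in $\sigma$) and the pair $(s_1,s_2)$ belongs to $\denotesmetaprolog{\Sigma_1\Mapsto\Sigma_2}$. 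Lemma~\ref{lm:cover2} then yields that $s_1\rightarrowsubprolog s_2$ is a genuine CHR transition.

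The remaining steps are immediate. From the hypothesis $\Imetaprolog(\Sigma_1)$ together with $s_1\in\denotesmetaprolog{\Sigma_1}$ I obtain $I\subprolog(s_1)$; applying the object-level invariant property, namely that $I\subprolog(s_1)$ and $s_1\rightarrowsubprolog s_2$ imply $I\subprolog(s_2)$, gives $I\subprolog(s_2)$. Since $s_2$ was arbitrary, $\Imetaprolog(\Sigma_2)$ follows.

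The one point requiring care, and which I expect to be the main obstacle in a fully rigorous write-up, is the alignment of the grounding substitution across $\Sigma_1$ and $\Sigma_2$: one must verify that the $\sigma\in[M]$ producing a given $s_2$ from $\Sigma_2$ is the very same $\sigma$ producing its predecessor $s_1$ from $\Sigma_1$, so that $(s_1,s_2)$ is indeed a covered transition to which Lemma~\ref{lm:cover2} applies. This is exactly what the shared constraint $M$ in the transition rules, together with the definition of the covering function on transitions, guarantees; it is also where the surjectivity of the covering onto $\denotesmetaprolog{\Sigma_2}$ is genuinely used, as opposed to merely the soundness direction recorded in Lemma~\ref{lm:cover2}.
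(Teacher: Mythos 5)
Your argument is correct and follows exactly the route the paper intends: the paper merely asserts Proposition~\ref{prop:invmeta} as a ``direct consequence'' of the definition of $\Imetaprolog$ and Lemma~\ref{lm:cover2}, and your write-up is precisely that argument spelled out — unfolding the covering function with a shared grounding $\sigma\in[M]$, invoking Lemma~\ref{lm:cover2} to obtain an object-level transition $s_1\rightarrowsubprolog s_2$, and closing with the assumed invariance of $I\subprolog$. Your remark about aligning $\sigma$ across the two states via the shared constraint $M$ (and via the pointwise agreement guaranteed by reducibility) is exactly the detail the paper leaves implicit.
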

\begin{proposition}\label{prop:eq}
The $\approxmetaprolog$ relation is an equivalence relation, i.e., reflexive, transitive and symmetric.
\end{proposition}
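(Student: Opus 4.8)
The plan is to lift each of the three defining properties of an equivalence relation from the object-level equivalence $\approxprolog$, which is an equivalence relation by assumption, up to $\approxmetaprolog$, exploiting the fact that the covering of a pair is computed by a single grounding substitution. Unfolding the covering function on the pair constructor gives $\denotesmetaprolog{(S_1,S_2)\where M}=\{(\denotes{S_1}^\sigma,\denotes{S_2}^\sigma)\mid\sigma\in[M]\}$, so both components of every covered pair are instantiated by the \emph{same} $\sigma$. Thus $S_1\approxmetaprolog S_2$ is equivalent to the inclusion $\denotesmetaprolog{(S_1,S_2)}\subseteq{\approxprolog}$ of the covered pairs into the object relation, and each property reduces to reasoning about this inclusion pointwise in $\sigma$.

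\emph{Symmetry and reflexivity.} For symmetry, swapping the two positions of the pair term yields $\denotesmetaprolog{(S_2,S_1)}=\{(s_2,s_1)\mid(s_1,s_2)\in\denotesmetaprolog{(S_1,S_2)}\}$; so if every covered pair lies in $\approxprolog$, then by symmetry of $\approxprolog$ every swapped pair does too, giving $S_2\approxmetaprolog S_1$. For reflexivity I read the pair $(S,S)$ over a single presentation with shared variables, so that $\denotes{(S,S)}^\sigma=(\denotes{S}^\sigma,\denotes{S}^\sigma)$ and $\denotesmetaprolog{(S,S)}$ is exactly the diagonal of $\denotesmetaprolog{S}$; reflexivity of $\approxprolog$ then gives $S\approxmetaprolog S$ immediately. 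The only thing to verify is that both occurrences of $S$ are grounded by the same $\sigma$, which is precisely the shared-substitution reading established above.

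\emph{Transitivity.} Assume $S_1\approxmetaprolog S_2$ and $S_2\approxmetaprolog S_3$. First I bring all three meta-states onto a common presentation with a single constraint store $M$ over a shared tuple of meta-variables. This is legitimate because, as in Example~\ref{ex:coveredM}, one may conjoin consistent constraints on fresh variables to a presentation without changing its set of covered object states, and both hypotheses mention the same meta-state $S_2$. Once the presentation is shared, every $\sigma\in[M]$ simultaneously produces $s_1=\denotes{S_1}^\sigma$, $s_2=\denotes{S_2}^\sigma$ and $s_3=\denotes{S_3}^\sigma$, so that $(s_1,s_2)\in\denotesmetaprolog{(S_1,S_2)}$ and $(s_2,s_3)\in\denotesmetaprolog{(S_2,S_3)}$; the hypotheses then give $s_1\approxprolog s_2$ and $s_2\approxprolog s_3$, and transitivity of $\approxprolog$ yields $s_1\approxprolog s_3$. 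Since $\sigma$ is arbitrary, this covers every pair in $\denotesmetaprolog{(S_1,S_3)}$, establishing $S_1\approxmetaprolog S_3$.

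I expect this common-presentation step to be the main obstacle: the two hypotheses may present the shared meta-state $S_2$ by different (variant) constrained terms, and one must argue that they can be reconciled into a single store $M$ under which one common $\sigma$ yields the bridging object state $s_2$ together with $s_1$ and $s_3$. This is exactly where the variance and add-fresh-constraints property is needed; once it is in place, the pointwise appeal to transitivity of $\approxprolog$ is routine, as are symmetry and reflexivity.
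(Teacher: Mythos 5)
Your proof is correct and is essentially the argument the paper intends: the paper offers no explicit proof, simply asserting the proposition as a direct consequence of the definition, namely that every covered pair $(\denotes{S_1}^\sigma,\denotes{S_2}^\sigma)$ is produced by a single grounding substitution $\sigma$, so that reflexivity, symmetry and transitivity of $\approxprolog$ lift pointwise --- which is exactly what you spell out. Your additional care in the transitivity case about reconciling the two presentations of the shared middle state $S_2$ into one constraint store addresses an informality the paper glosses over entirely, and your appeal to variance and the add-fresh-constraints property is an adequate way to discharge it.
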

To express meta-level invariants and equivalences we assume two meta-level constraints.
\begin{definition} The theory $\Mmetaprolog$ includes 
 two constraints $\cinv$ and $\cequiv$
  such that $\cinv(\Sigma)$ is true in $\Mmetaprolog$ whenever $\denotesgr{\Sigma}$ is an $\Iprolog$ state (representation) of the Prolog-based semantics,
  and $\cequiv(\Sigma_1,\Sigma_2)$ whenever  $\denotesgr{(\Sigma_1,\Sigma_2)}$ is a pair of
  state  (representation)s $(s_1,s_2)$  such that  $s_1\approxprolog s_2$.
\end{definition}
The following correspondences follow immediately from the definitions above.
\begin{proposition}\label{prop:lift-inv-and-equiv}
$\Imetaprolog  \;=\;  \{S \where M \mid  \Mmetaprolog\models M\rightarrow \cinv(S) \}$\\
\hbox to 7.5em{}$\approxmetaprolog  \;=\;  \{ (S_1,S_2)\where M \mid \Mmetaprolog\models M \rightarrow \cequiv(S_1, S_2) \}$
\end{proposition}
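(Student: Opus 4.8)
The plan is to prove the two set equalities by the same template: fix an arbitrary constrained meta-level state representation, unfold the defining predicate of the left-hand family through the covering function $\denotesmetaprolog{-}$, unfold the meta-level entailment defining the right-hand family into a statement quantified over grounding substitutions, and then observe that the two resulting conditions are word-for-word the same universally quantified statement about the covered object states. Since each claim is an equality of two sets defined by a membership condition, I would establish it as a single chain of ``iff''s rather than proving two inclusions separately.

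For the invariant equality I would take $\Sigma = (S \where M)$ arbitrary. By the definition of $\Imetaprolog$ as a predicate, $\Sigma$ lies in the left-hand set exactly when $\Iprolog(s)$ holds for every $s \in \denotesmetaprolog{\Sigma}$. Expanding the covering function, $\denotesmetaprolog{S \where M} = \{\denotes{S}^\sigma \mid \sigma \in [M]\}$, so this condition reads: for all $\sigma \in [M]$, the object state $\denotes{S}^\sigma$ is an $\Iprolog$-state. On the right, reading $\Mmetaprolog \models M \rightarrow \cinv(S)$ through the same ground-instance semantics used to define $[M]$ — namely, all ground instances of the implication hold — gives: for all $\sigma \in [M]$, $\Mmetaprolog \models \cinv(S\sigma)$. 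The bridge between the two is the definition of $\cinv$, by which $\cinv(S\sigma)$ is true exactly when the (single) object state named by the ground term $S\sigma$ is an $\Iprolog$-state. Both sides therefore collapse to the identical requirement ``for every $\sigma \in [M]$, $\denotes{S}^\sigma$ is an $\Iprolog$-state'', which proves the equality.

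The equivalence equality is argued identically, now for a pair $(S_1, S_2) \where M$: the left-hand membership unfolds, via $\denotesmetaprolog{(S_1, S_2)} = \{(\denotes{S_1}^\sigma, \denotes{S_2}^\sigma) \mid \sigma \in [M]\}$, to ``$\denotes{S_1}^\sigma \approxprolog \denotes{S_2}^\sigma$ for all $\sigma \in [M]$'', and the defining clause of $\cequiv$ makes the right-hand condition say the same thing. The only point requiring genuine care — the hard part, such as it is — is the alignment of the two denotation brackets: the covering function $\denotes{-}^\sigma$ executes the built-in component of a state constructor $\langle S',B\rangle$ through $\exe$, whereas $\cinv$ and $\cequiv$ are phrased with the naming bracket $\denotesgr{-}$. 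I would make explicit that, for a grounding $\sigma$, evaluating $\cinv(S\sigma)$ (resp.\ $\cequiv(S_1\sigma, S_2\sigma)$) refers to the object state(s) obtained after this built-in execution, i.e.\ to $\denotes{S}^\sigma$ (resp.\ $\denotes{S_1}^\sigma, \denotes{S_2}^\sigma$); once this identification is recorded, no discrepancy remains between the two brackets and both biconditional chains close, yielding the two equalities.
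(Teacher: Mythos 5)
Your proposal is correct and matches the paper's intent exactly: the paper offers no explicit proof beyond the remark that the correspondences ``follow immediately from the definitions,'' and your unfolding of $\Imetaprolog$, $\approxmetaprolog$, the covering function, and the defining clauses of $\cinv$/$\cequiv$ into one shared universally quantified condition over $\sigma\in[M]$ is precisely that immediate argument made explicit. Your added care in aligning the two denotation brackets (the $\exe$-executing $\denotes{-}^\sigma$ versus the naming bracket $\denotesgr{-}$) addresses a point the paper glosses over, and is a welcome clarification rather than a deviation.
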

\begin{example}[Example~\ref{ex:zigzagPlain} cont']\label{ex:zigzagPlain-inv}
The invariant defined as ``reachable from an initial state \texttt{p($n$)} where $n$ is a number'' for the program in Example~\ref{ex:zigzagPlain} may be formalized at the meta-level as states of the form $\langle\{\mathit{pred}\texttt{($n$)}\},B\rangle\where \texttt{type(num,$n$)} \land \willsucced_n B$ where $\mathit{pred}$ is one of
\texttt{p}, \texttt{q} and \texttt{r}, and $B$ is one of \texttt{$n\ttleq$0}, $n\texttt{>0}$, and $\true$. 
\end{example}
\begin{example}\label{ex:set-inv-eq} (Example~\ref{ex:set} continued)
Consider again the \texttt{set}-program. An invariant relevant for this program may be formalized
 as states of the form
{\small$$\langle \{\texttt{set($L$)}\}{\uplus}S,\true\rangle \where \texttt{type(constList,$L$)}{\land}\texttt{type(constItems,$S$)};$$}\noindent
we assume types \texttt{const} for all constants,
\texttt{constList} for all lists of such, and \texttt{constItems} for sets of constraints of the form \texttt{item($c$)}
where $c$ is a constant.

A state equivalence relevant for this program may be formalized as follows
{\small$$\langle \{\texttt{set($L_1$)}\}{\uplus}S,\true\rangle \where M^{\approx}
\quad\approxmetaprolog\quad
\langle \{\texttt{set($L_2$)}\}{\uplus}S,\true\rangle \where M^{\approx}
$$}\noindent
where $M^{\approx}$ stands for {\small$\texttt{type(constList,$L_1$)}{\land}
\texttt{type(constList,$L_1$)}{\land}\texttt{perm($L_1$,$L_2$)}
{\land}\break\texttt{type(constItems,$S$)}$}, and 
 {\small\texttt{perm($L_1$,$L_2$)}} means
that  $L_1$ is a permutation of $L_2$.

\end{example}

\subsection{Meta-level joinability and splitting}
Joinability at the meta-level is related to joinability at the object level.
\begin{proposition}\label{prop:coverjoinable}
If a meta-level corner is joinable modulo $\approxmetaprolog$ then all corners covered by it are joinable modulo $\approxprolog$.
\end{proposition}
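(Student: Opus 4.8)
The plan is to take a single, fixed meta-level joinability proof and instantiate it, by a grounding substitution, into an object-level joinability proof for each covered corner. I would write the meta-level corner over a common constraint store $M$: in the $\alpha$-case it is $(\Sigma_1 \Mapsto \Sigma_0 \Mapsto \Sigma_2)\where M$ (with the left arrow reversed) and in the $\beta$-case $(\Sigma_1 \approxmetaprolog \Sigma_0 \Mapsto \Sigma_2)\where M$. By the definition of meta-level joinability modulo $\approxmetaprolog$, there are fixed meta-level sequences $\Sigma_1 \Mapsto^{*} \Sigma_1'$ and $\Sigma_2 \Mapsto^{*} \Sigma_2'$ together with $\Sigma_1'\approxmetaprolog\Sigma_2'$. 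An arbitrary covered object corner arises, by the definition of the covering function $\denotesmetaprolog{-}$, from some $\sigma\in[M]$, yielding $\denotes{\Sigma_1}^{\sigma} \leftarrowsubprolog \denotes{\Sigma_0}^{\sigma} \rightarrowsubprolog \denotes{\Sigma_2}^{\sigma}$ (respectively $\denotes{\Sigma_1}^{\sigma} \approxprolog \denotes{\Sigma_0}^{\sigma} \rightarrowsubprolog \denotes{\Sigma_2}^{\sigma}$). The goal is to show this object corner is joinable modulo $\approxprolog$.

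The core step is to push the two join branches down to the object level. Applying Lemma~\ref{lm:cover2} to the first step of $\Sigma_1 \Mapsto^{*} \Sigma_1'$ shows that its covered instance is a CHR transition, and iterating along the whole sequence yields an object-level sequence $\denotes{\Sigma_1}^{\sigma} \mapstostar \denotes{\Sigma_1'}^{\sigma}$; symmetrically $\denotes{\Sigma_2}^{\sigma} \mapstostar \denotes{\Sigma_2'}^{\sigma}$. For the closing equivalence, the definition of $\approxmetaprolog$ states that $\Sigma_1'\approxmetaprolog\Sigma_2'$ holds exactly when $s_1\approxprolog s_2$ for every $(s_1,s_2)\in\denotesmetaprolog{(\Sigma_1',\Sigma_2')}$; instantiating at $\sigma$ gives $\denotes{\Sigma_1'}^{\sigma} \approxprolog \denotes{\Sigma_2'}^{\sigma}$. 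Chaining these three facts produces $\denotes{\Sigma_1}^{\sigma} \mapstostar \denotes{\Sigma_1'}^{\sigma} \approxprolog \denotes{\Sigma_2'}^{\sigma} \mapsfromstar \denotes{\Sigma_2}^{\sigma}$, which is exactly joinability modulo $\approxprolog$ of the object corner. The $\beta$-case differs only in its left leg: the source equivalence $\denotes{\Sigma_1}^{\sigma}\approxprolog\denotes{\Sigma_0}^{\sigma}$ is again read off from the definition of $\approxmetaprolog$ applied to $\Sigma_1\approxmetaprolog\Sigma_0$ at the same $\sigma$.

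The step that requires care, and which I expect to be the main obstacle, is the bookkeeping of fresh meta-variables. Each meta-level rule-application step in the join branches lifts a pre-application with new local and head variables, so the meta-level terms $\Sigma_1',\Sigma_2'$ and the intermediate states mention meta-variables not constrained by $M$; correspondingly, the grounding $\sigma\in[M]$ of the original corner need not assign them. I would resolve this by extending $\sigma$ to a grounding $\sigma'$ that also instantiates the fresh variables in a way consistent with the enabling conditions ($\willsucced_H$ and \texttt{freshVars}) that, by definition of meta-level transition, hold for all groundings of the store. Since these fresh variables do not occur in $\Sigma_0,\Sigma_1,\Sigma_2$, the covered corner is unchanged, i.e.\ $\denotes{\Sigma_i}^{\sigma'}=\denotes{\Sigma_i}^{\sigma}$ for $i\in\{0,1,2\}$ --- this is the ``constraints on unmentioned variables do not change the covered set'' principle illustrated in Example~\ref{ex:coveredM}. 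Running the argument of the previous paragraph with $\sigma'$ in place of $\sigma$ then yields the desired object-level joinability, and because the meta-level join proof was fixed once and for all, the same proof witnesses joinability of every covered corner simultaneously.
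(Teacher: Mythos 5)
Your proof is correct and follows essentially the same route as the paper, which simply states the result as ``a direct consequence of Lemma~\ref{lm:cover2} and the definition of $\approxmetaprolog$'' --- i.e.\ exactly your strategy of pushing each meta-level join step down to an object-level transition via Lemma~\ref{lm:cover2} and reading the closing equivalence off the definition of $\approxmetaprolog$. Your additional care about extending the grounding substitution to the fresh meta-variables introduced along the join branches is a detail the paper glosses over, but it does not change the argument.
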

\begin{proof}
This is a direct consequence of Lemma~\ref{lm:cover2} and definition of $\approxmetaprolog$.
\end{proof}
However, a meta-level corner covering only joinable (mod.\ $\approxprolog$) object corners may not be joinable (mod.\ $\approxmetaprolog$), as demonstrated by following example.
\begin{example}(Example~\ref{ex:zigzagPlain}-\ref{ex:zigzagPlain-inv} cont')\label{ex:zigzagPlain-cov}
Consider the meta-level corner
$$\small \Lambda = \langle \{\texttt{q($n$)} \},\true\rangle \Mapsfrom \langle\{\texttt{p($n$)} \},\true\rangle  \Mapsto  
\langle\{\texttt{r($n$)} \},\true\rangle \qquad  \where \texttt{type(num,$n$)}.$$
No rule apply to either of the wing states and, thus, they are not joinable. However, all the covered corners are joinable, e.g.\ using rule $r_3$ for $\texttt{q(1)} \ourmapsto  \texttt{r(1)}$ and rule $r_4$ for $\texttt{q(-1)} \ourmapsfrom \texttt{r(-1)}$.
\end{example}
To accommodate this phenomenon we introduce splitting of meta-level terms such as states, transitions, equivalences, or corners and afterwards split-joinability (mod.\ $\approxmetaprolog$).

Let $(mt \where M)$ be a constrained meta-level term, and
$\{M_i\}_{i\in\mathit{Inx}}$  a set of meta-level constraints such that
$\Mmetaprolog\models \left(\bigvee_{i\in\mathit{Inx}} M_i\right) \leftrightarrow \true.$ 
We can now define a \emph{splitting} of $(mt \where M)$ into the following set of constrained meta-level terms
$$\{mt \where M\land M_i \}_{i\in\mathit{Inx}}.$$
The following is a direct consequence of the definitions.
\begin{proposition}
If $\{mt \where M\land M_i \}_{i\in\mathit{Inx}}$ is a splitting of $(mt \where M)$ then $\denotes{mt \where M} = \bigcup_{i\in\mathit{Inx}}\denotes{mt \where M\land M_i}$.
\end{proposition}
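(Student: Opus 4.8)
The plan is to unfold both sides of the claimed equality straight from the definition of the covering function and to reduce the proposition to a routine set-theoretic fact about the solution sets $[M]$ and $[M_i]$. Recall that $\denotes{mt \where M'} = \{\denotes{mt}^\sigma \mid \sigma \in [M']\}$ for any meta-level constraint $M'$, where $[M']=\{\sigma \mid \Mmetaprolog \models M'\sigma\}$. The first step is to record that conjunction of constraints matches intersection of solution sets, i.e.\ $[M \land M_i] = [M] \cap [M_i]$; this is immediate since $\Mmetaprolog \models (M\land M_i)\sigma$ holds precisely when both $\Mmetaprolog \models M\sigma$ and $\Mmetaprolog \models M_i\sigma$ hold.

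With this in hand I would prove the two inclusions. For $\supseteq$, note that $[M \land M_i] \subseteq [M]$ for every $i$, hence $\denotes{mt \where M\land M_i} \subseteq \denotes{mt \where M}$, and therefore the union over $i$ is contained in $\denotes{mt \where M}$. For $\subseteq$, take an arbitrary element $\denotes{mt}^\sigma$ of $\denotes{mt \where M}$, so $\sigma \in [M]$. The defining property of a splitting, namely $\Mmetaprolog \models (\bigvee_{i\in\mathit{Inx}} M_i) \leftrightarrow \true$, guarantees that $\sigma$ satisfies at least one disjunct, i.e.\ there is some $i$ with $\Mmetaprolog \models M_i\sigma$; thus $\sigma \in [M] \cap [M_i] = [M \land M_i]$, and consequently $\denotes{mt}^\sigma \in \denotes{mt \where M\land M_i}$, which lies in the union. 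Alternatively, one can give this in a single line by observing that the map $\sigma \mapsto \denotes{mt}^\sigma$ sends unions to unions, so that $\bigcup_{i}\denotes{mt \where M\land M_i} = \{\denotes{mt}^\sigma \mid \sigma \in \bigcup_{i}([M]\cap[M_i])\}$ and then simplifying $\bigcup_{i}([M]\cap[M_i]) = [M] \cap \bigcup_{i}[M_i] = [M]$.

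The only point that requires genuine care, and which I would expect to be the sole source of a slip, is the correct reading of the splitting condition $(\bigvee_{i\in\mathit{Inx}} M_i) \leftrightarrow \true$ inside the theory $\Mmetaprolog$: it must be interpreted as the assertion that \emph{every} grounding substitution satisfies the disjunction, equivalently that $\bigcup_{i}[M_i]$ is the set of all grounding substitutions. Once this is made explicit, the intersection with $[M]$ collapses to $[M]$ and the equality follows. I do not anticipate any difficulty arising from the $\denotes{-}^\sigma$ machinery itself, since every element of a covering set is witnessed by a single substitution $\sigma$ and the split constraints only restrict, never enlarge, the range of admissible $\sigma$; in particular the structural clauses defining $\denotes{-}^\sigma$ play no role beyond fixing the value $\denotes{mt}^\sigma$ attached to each $\sigma$.
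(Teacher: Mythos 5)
Your proof is correct and matches the paper's intent exactly: the paper states this proposition without proof as ``a direct consequence of the definitions,'' and your unfolding of $\denotes{mt\where M'}$ via $[M\land M_i]=[M]\cap[M_i]$ together with the splitting condition $\Mmetaprolog\models(\bigvee_i M_i)\leftrightarrow\true$ forcing $\bigcup_i[M_i]$ to be all grounding substitutions is precisely the routine verification being elided. Nothing is missing.
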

\begin{example}(Example~\ref{ex:zigzagPlain}, \ref{ex:zigzagPlain-inv}, \ref{ex:zigzagPlain-cov} cont')\label{ex:zigzag-split}
The meta-level corner $\Lambda$ (Example~\ref{ex:zigzagPlain-cov}) is a constrained meta-level term and $\{\willsucced(n\texttt{>0}), \willsucced(n \ttleq 0)\}$ is a set of constraints with $\Mmetaprolog\models (\willsucced(n\texttt{>0}) 
\lor \willsucced(n \ttleq 0) )
 \leftrightarrow \true.$
Thus, we may split $\Lambda$ into the corners
$$\small \langle \{\texttt{q($n$)} \},\true\rangle \Mapsfrom \langle\{\texttt{p($n$)} \},\true\rangle  \Mapsto  
\langle\{\texttt{r($n$)} \},\true\rangle \quad  \where \texttt{type(num,$n$)}\land \willsucced(n\texttt{>0})$$ 
$$\small \langle \{\texttt{q($n$)} \},\true\rangle \Mapsfrom \langle\{\texttt{p($n$)} \},\true\rangle  \Mapsto  
\langle\{\texttt{r($n$)} \},\true\rangle \quad  \where \texttt{type(num,$n$)}\land \willsucced(n \ttleq 0) .$$
\end{example}
\begin{definition}[split-joinable]\label{def:splitjoin}
A meta-level corner $\Lambda$  is split-joinable modulo $\approxmetaprolog$  if there is a splitting $\{\Lambda_i\}_{i\in\mathit{Inx}}$ such that each
$\Lambda_i$ is joinable modulo $\approxmetaprolog$.
\end{definition}
\begin{example}(Example~\ref{ex:zigzagPlain}, \ref{ex:zigzagPlain-inv}, \ref{ex:zigzagPlain-cov} cont')\label{ex:zigzag-splitjoin}
The splitted corners of Example~\ref{ex:zigzag-split} are joinable as follows; thus, the corner $\Lambda$ of Example~\ref{ex:zigzagPlain-cov} is split-joinable.  
$$\small \langle \{\texttt{q($n$)} \},\true\rangle \Mapsto 
\langle\{\texttt{r($n$)} \},\true\rangle \quad  \where \texttt{type(num,$n$)}\land \willsucced(n\texttt{>0})$$ 
$$\small \langle \{\texttt{q($n$)} \},\true\rangle \Mapsfrom  
\langle\{\texttt{r($n$)} \},\true\rangle \quad  \where \texttt{type(num,$n$)}\land \willsucced(n \ttleq 0) .$$ 
\end{example}

\begin{proposition}\label{prop:meta-corner}
A meta-level corner is split-joinable modulo $\approxmetaprolog$ if and only if all corners covered by it are joinable modulo $\approxprolog$.
\end{proposition}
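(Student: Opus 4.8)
The plan is to prove the two implications separately, treating the forward direction as a short corollary of earlier results and concentrating the real work on the converse. Throughout, write the corner as $\Lambda=(mt\where M)$, where $mt$ is the corner template with wings $s_1$ and $s_2$ (an $\alpha$-corner $s_1\Mapsfrom s_0\Mapsto s_2$, or a $\beta$-corner $s_1\approxmetaprolog s_0\Mapsto s_2$), and work inside the invariant, so that all states reached remain $\Imetaprolog$-states by Proposition~\ref{prop:invmeta}.

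For the ``only if'' direction, assume $\Lambda$ is split-joinable modulo $\approxmetaprolog$. By Definition~\ref{def:splitjoin} there is a splitting $\{\Lambda_i\}_{i\in\mathit{Inx}}$ with each $\Lambda_i$ joinable modulo $\approxmetaprolog$, so by Proposition~\ref{prop:coverjoinable} every object corner covered by $\Lambda_i$ is joinable modulo $\approxprolog$. Since a splitting satisfies the covering identity $\denotes{mt\where M}=\bigcup_{i}\denotes{mt\where M\land M_i}$ (the proposition preceding Definition~\ref{def:splitjoin}), every corner covered by $\Lambda$ is covered by some $\Lambda_i$ and is therefore joinable modulo $\approxprolog$. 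This settles the easy direction.

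For the ``if'' direction I would build a splitting indexed by \emph{meta-level joinability proofs}. Assume every object corner covered by $\Lambda$ is joinable modulo $\approxprolog$; thus for each grounding $\sigma\in[M]$ the covered corner admits a finite joining derivation $\denotes{s_1}^\sigma\mapstostar t_1\approxprolog t_2\mapsfromstar\denotes{s_2}^\sigma$. I would lift each such object derivation to a meta-level derivation issuing from the wings of $\Lambda$, and index the splitting by the resulting \emph{proof shapes}. Concretely, a proof shape $P$ is a finite sequence of meta-level transitions from $s_1$ to some $u_1$, a finite sequence from $s_2$ to some $u_2$, together with the constraint $M_P$ obtained by conjoining all the modal guard-success conditions $\willsucced_H G$ and all freshness conditions demanded by these transitions with the equivalence constraint $\cequiv(u_1,u_2)$. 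Under $M\land M_P$ the two meta-derivations are valid by construction (each step's side condition is a conjunct of $M_P$, so the entailments required by the meta-transition rules hold), and by Proposition~\ref{prop:lift-inv-and-equiv} the constraint $\cequiv(u_1,u_2)$ forces $u_1\approxmetaprolog u_2$; hence the restricted corner $(mt\where M\land M_P)$ is joinable modulo $\approxmetaprolog$.

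It then remains to check that the shapes cover $[M]$ and to meet the formal tautology requirement of a splitting. Coverage holds because a given $\sigma\in[M]$ determines, from its concrete joining derivation, a shape $P$ for which $\sigma\models M_P$: at the object level every guard succeeded and the endpoints satisfied $\approxprolog$, so the lifted modal, freshness, and $\cequiv$ conjuncts all hold under $\sigma$. To turn $\{M_P\}_P$ into a genuine splitting I would add one complementary index whose constraint falsifies $M$; the corresponding piece is inconsistent, hence covers nothing and is vacuously joinable modulo $\approxmetaprolog$ (take empty derivations, whose endpoints are $\approxmetaprolog$-related vacuously), while the disjunction over all indices is now valid in $\Mmetaprolog$ since any grounding either falsifies $M$ or satisfies some $M_P$. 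Applying Proposition~\ref{prop:coverjoinable} is not needed here, but Lemma~\ref{lm:cover2} is used in the form that the lifted meta-derivations, under $M\land M_P$, really do cover the intended object derivations.

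The hard part will be the lifting (completeness) step, i.e.\ establishing that each individual object step in the joining derivation of $\denotes{s_1}^\sigma$ is the $\sigma$-instance of a meta-level transition \emph{that issues from the wing term $s_1$ itself}, rather than from some unrelated lifting of $\denotes{s_1}^\sigma$. This requires pulling back the object rule application's head-matching to a sub-multiset of $s_1$, lifting the rule so that its guard-success and freshness conditions are recorded exactly, and iterating along the derivation while carrying the fixed constraint; built-in steps (including those producing $\failure$ or $\error$) must be handled by the same pull-back. A subtlety, but not an obstruction, is that different $\sigma$ may select different rules or different head-matchings: these simply yield different shapes $P$, which is harmless because the index set of a splitting may be infinite.
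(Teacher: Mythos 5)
Your first direction (split-joinable implies all covered corners joinable) is exactly the paper's argument: combine the covering identity of splittings with Proposition~\ref{prop:coverjoinable}. The converse is where you diverge. The paper splits $\Lambda$ all the way down to \emph{ground} corners: it indexes the covered object corners, lifts each $\lambda_i$ to a ground meta-level corner $\Lambda_i$ with $\denotes{\Lambda_i}=\{\lambda_i\}$, and takes these singletons as the splitting. The point of going ground is that for a singleton-covering corner the side conditions of a meta-level transition ($\Mmetaprolog\models M\rightarrow\willsucced_H G\land\cdots$) collapse to the object-level applicability conditions, so the object-level joining derivation lifts to a meta-level one essentially for free, and $\cequiv$ of the ground end states is just $\approxprolog$ of the objects they name. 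Your splitting by ``proof shapes'' is coarser and more informative --- it is the kind of finite splitting one would actually want in a mechanized prover, cf.\ Example~\ref{ex:zigzag-split} --- but it makes the lifting step genuinely harder, and that step, which you yourself flag as ``the hard part,'' is not established in your proposal.

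Concretely, the gap is this: a meta-level rule transition requires the source to be \emph{reducible} to a form $\langle H_1\uplus H_2\uplus S,\true\rangle\where M$ in which the lifted head is a syntactic sub-multiset of the state term, and requires $M$ to \emph{entail} $\willsucced_H G$ and the freshness constraint. Conjoining the modal and freshness conditions into $M_P$ is not enough: for a non-ground wing, the object-level head matching under a given $\sigma$ need not correspond to any syntactic decomposition of the meta-level multiset term (two distinct meta-variables may be identified by $\sigma$, or the matched constraints may only emerge after executing the built-in component $B$ of the wing), so you would additionally have to record equality and instantiation constraints forcing that decomposition, and argue that reducibility is preserved along the whole lifted derivation. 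None of this is impossible, but it amounts to proving a completeness converse of Lemma~\ref{lm:cover2}, which the paper deliberately sidesteps by working with ground corners where meta-level and object-level transitions coincide. Either adopt the paper's ground splitting, or state and prove the lifting lemma explicitly.
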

\begin{proof}
``\textit{If}'':
Let $\Lambda$ denote a meta-level corner that is split-joinable modulo $\approxmetaprolog$; that is there exists an indexed set of corners 
$\{\Lambda_i\}_{i\in \mathit{Inx}}$
such that
$\denotes{\Lambda} = \bigcup_{i \in \mathit{Inx}}  \denotes{\Lambda_i}$ where each $\Lambda_i$ is joinable modulo $\approxmetaprolog$.
Thus, for any object corner $\lambda$  covered by $\Lambda$, there is some $\Lambda_i$ that covers $\lambda$ and is joinable modulo $\approxmetaprolog$. By Proposition~\ref{prop:coverjoinable}, $\lambda$ is joinable modulo $\approxprolog$.

``\textit{Only if}'':
Index the set of object corners by $\mathit{Inx}$, and lift every object corner $\lambda_i$ to a ground meta-level corner $\Lambda_i$ such that $\denotes{\Lambda_i}=\{\lambda_i\}$. Now, we choose a splitting of the meta-level corner $\Lambda$ into a set $\cup_{i\in \mathit{Inx}} \Lambda_i$, each of which is joinable modulo $\approxmetaprolog$, obtaining, by Definition~\ref{def:splitjoin}, that  $\Lambda$ is split-joinable modulo $\approxmetaprolog$. \qed
\end{proof}

\section{Confluence}\label{sec_confluence}
As noted in Section~\ref{sec:prelim}, (obs.) confluence (mod.\ equiv.)
of a terminating system can be shown by (obs.) local confluence (mod.\ equiv.).  Similarly to previous work we define a smaller set of critical corners whose joinability ensures this property.
\begin{definition}[critical $\alpha$-corners]\label{def:crit-alphacorner}~\\ 
\textbf{critical $\alpha_1$-corners.}
Let $\langle A \uplus S, \true \rangle \where M$ be a proper and reduced meta-level state where $A=(H_1 \uplus H_2) \cup (H'_1 \uplus H'_2)$, $M = \willsucced_H G \land  \willsucced_{H'} G' \land  \cfreshvars(L,L',A\uplus S)$ and $(H_1 \uplus H_2) \cap H'_2 \neq \emptyset$. Furthermore, let  
$\nonamerule {H_1}{H_2}GC$ and
$\nonamerule {H'_1}{H'_2}{G'}{C'}$ be two  meta-level rule applications with local and head variables $L$, $H$, and $L'$, $H'$, respectively.
A \emph{critical $\alpha_1$-corner} is defined as
$$\langle (A\! \setminus\! H_2) \uplus C \uplus S, G \rangle
 \;\ourlongmapsfrommetaprolog \;
 \langle A \uplus S, \true \rangle
 \;\ourlongmapstometaprolog \;
\langle (A\!\setminus\! H'_2) \uplus C' \uplus S, G' \rangle \qquad \where M.$$ 
\textbf{critical $\alpha_2$-corners.} 
Let $\langle A \uplus S, \true \rangle \where M$ be a meta-level state where $A=H_1 \uplus H_2\uplus \{b\}$, $ M = \willsucced_H G \land \cfreshvars(L,A \uplus S)$, $b$ is a built-in template and there is a meta-level rule application $\nonamerule {H_1}{H_2}GC$  with local and head variables $L$ and $H$.  
A \emph{critical $\alpha_2$-corner} is defined as\\
$$\;\;\langle (A\!\setminus\! H_2) \uplus C \uplus \{b\} \uplus S, G \rangle
 \ourlongmapsfrommetaprolog 
 \langle A \uplus S, \true \rangle
 \ourlongmapstometaprolog 
\langle A\!\setminus\!\{b\} \uplus S, b \rangle \qquad \where M.$$ 
\textbf{critical $\alpha_3$-corners.}
Let $\langle \{b, b'\}  \uplus S, \true \rangle \where M$ be a meta-level state where $b$ and $b'$ are built-in templates. 
A \emph{critical $\alpha_3$-corner} is defined as\\
$$ \langle \{b'\} \uplus S, b \rangle 
 \;\ourlongmapsfrommetaprolog\; 
 \langle \{b, b'\}  \uplus S, \true \rangle
 \;\ourlongmapstometaprolog \;
\langle \{b\} \uplus S, b' \rangle \qquad \where M.$$
\end{definition}
\begin{example}(Example~\ref{ex:empty-is-non-confl} cont')\label{ex:empty-is-non-confl-meta}
Assuming built-ins \texttt{is}/2 and \texttt{=}/2, the empty program is shown non-confluent; the critical $\alpha_3$-corner $( \error \,\Mapsfrom \, \{ x \; \texttt{is}\, y\texttt{+1}, y\texttt{=2} \} \,\Mapsto \{x \,  \texttt{is}\, \texttt{2+1}\})  \where \texttt{type(var,$x$)} \land \texttt{type(var,$y$)}$ is not split-joinable. 
The program is confl.\ under invariants excluding built-ins; the corners are inconsistent.
\end{example}
\begin{example}(Example~\ref{ex:zigzagPlain}, \ref{ex:zigzagPlain-inv}, \ref{ex:zigzagPlain-cov}-\ref{ex:zigzag-splitjoin} cont')\label{ex:zigzagPlain-corners}
Consider again the non-confluent program without the invariant. There is a single non-split-joinable critical $\alpha_1$-corner $ \texttt{q($n$)} \Mapsfrom  \texttt{p($n$)} \Mapsto  \texttt{r($n$)}\where \true$. There is a series of joinable $\alpha_2$-corners where the rule and the built-in transition commute. There are many non-joinable critical $\alpha_3$-corners, see Example~\ref{ex:empty-is-non-confl-meta}. 
\end{example}
\begin{definition}\label{def:crit-obs-alphacorner}
Let $(\Sigma_1 \Mapsfrom  \Sigma_0 \Mapsto \Sigma_2 ) \where M$ be a critical $\alpha_i$-corner for $i\in\{1,2,3\}$;
a \emph{critical observable $\alpha_i$-corner} is defined as $(\Sigma_1 \Mapsfrom  \Sigma_0 \Mapsto  \Sigma_2 ) \where M \land \cinv(\Sigma_0)$.
\end{definition}
Notice that by Proposition~\ref{prop:invmeta}, it follows that $\Mmetaprolog\models M\rightarrow \cinv(\Sigma_1) \land \cinv(\Sigma_2)$.

\begin{theorem}
An observable terminating program $\Pi$ is observably confluent iff its observable critical $\alpha$-corners are joinable.
\end{theorem}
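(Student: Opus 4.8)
The plan is to reduce this theorem to the already-established machinery rather than to reprove local confluence from scratch. By the generalized Newman/Huet lemma (the unnamed lemma in Section~\ref{sec:prelim}), an observably terminating system is observably confluent iff it is locally observably confluent, i.e., iff every observable $\alpha$-corner is joinable. So it suffices to show that \emph{every} observable object-level $\alpha$-corner is joinable if and only if every critical observable $\alpha$-corner (as in Definition~\ref{def:crit-obs-alphacorner}) is joinable. The bridge between object-level joinability and meta-level split-joinability is Proposition~\ref{prop:meta-corner}, so the real content is to show that the three families of critical $\alpha$-corners ($\alpha_1,\alpha_2,\alpha_3$), restricted by $\cinv(\Sigma_0)$, \emph{cover} exactly the observable object-level $\alpha$-corners up to the irrelevant choices that do not affect joinability.

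First I would classify an arbitrary object-level $\alpha$-corner $s_1 \ourmapsfrom s_0 \ourmapsto s_2$ according to the two transition sorts of the Prolog-based semantics (by rule application and by built-in). This gives exactly three combinations up to symmetry: rule/rule, rule/built-in, and built-in/built-in, matching $\alpha_1$, $\alpha_2$, and $\alpha_3$ respectively. For the rule/rule case I would argue, following the standard CHR critical-pair analysis, that if the two matched head multisets do not overlap in a removed constraint then the steps trivially commute (yielding a joinable corner by a single diagram-completing step each way), so the only corners that can fail to be trivially joinable are those with $(H_1\uplus H_2)\cap H'_2 \neq \emptyset$, which is precisely the overlap condition imposed in Definition~\ref{def:crit-alphacorner}. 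The $\alpha_2$ and $\alpha_3$ cases are handled analogously, observing that distinct built-ins and a rule always act on disjoint parts of the state except in the templated critical configurations.

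Next I would make the covering argument precise. For the ``only if'' direction (critical corners joinable $\Rightarrow$ all observable corners joinable), every observable object $\alpha$-corner is obtained as an instance of a lifted critical $\alpha_i$-corner with a grounding substitution $\sigma\in[M\land\cinv(\Sigma_0)]$; since $\sigma$ satisfies $\cinv(\Sigma_0)$ the covered corner is observable, and by Proposition~\ref{prop:meta-corner} (and Proposition~\ref{prop:coverjoinable}) split-joinability of the critical corner delivers joinability of the instance. Conversely, if all observable object corners are joinable, then in particular every object corner covered by each critical observable $\alpha_i$-corner is joinable, so by Proposition~\ref{prop:meta-corner} each critical observable corner is split-joinable, hence joinable in the sense intended. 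Here I must take care that ``joinable'' for a critical corner is read through split-joinability modulo $\approxmetaprolog$; with $\approxmetaprolog$ trivial (plain confluence, not modulo equivalence) this collapses to ordinary split-joinability, and Proposition~\ref{prop:meta-corner} applies directly.

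The main obstacle I expect is the completeness of the case analysis: verifying that the three critical families, together with the freshness and modal side-conditions encoded in $M$, genuinely cover \emph{all} observable $\alpha$-corners and not merely the ``interesting'' ones. In particular I would need to check that trivially-commuting (non-overlapping) corners are either already covered and shown split-joinable by a one-step diagram, or else excluded by the overlap condition without loss, and that the reduction of a general meta-level state to the $\langle H_1\uplus H_2\uplus S,\true\rangle$ form (via reducibility) does not lose any observable corner. The freshVars and $\willsucced_H$ conditions must be shown to be exactly the meta-level transcription of the object-level rule-application requirements (that $\exe(G)$ is proper and does not bind head variables), which is essentially the content already proved in Lemma~\ref{lm:cover2}; invoking that lemma is what keeps this step from ballooning into a full re-derivation of the operational semantics.
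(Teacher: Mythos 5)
Your proposal is correct and follows essentially the same route as the paper: the paper's own proof is a one-line appeal to the subsequent Critical Corner Theorem (Theorem~\ref{thm:crit-corner}), whose proof in turn combines the generalized Newman/Huet lemma for observably terminating systems, the case analysis of the Critical Corner Lemma (Lemma~\ref{lm:crit-corner}) showing that non-overlapping corners commute and all others are covered by critical $\alpha_1$/$\alpha_2$/$\alpha_3$-corners, and Proposition~\ref{prop:critjoincoverjoin} to transfer between object-level joinability and meta-level split-joinability---exactly the ingredients you assemble. Your remark that ``joinable'' in the statement must be read as split-joinable (with $\approxmetaprolog$ trivial so that $\beta$-corners disappear) correctly identifies the one point where the theorem's wording is looser than Theorem~\ref{thm:crit-corner}.
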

\begin{proof}
A direct consequence of the subsequent Theorem~\ref{thm:crit-corner}.
\end{proof}
\begin{example}(Example~\ref{ex:zigzagPlain}, \ref{ex:zigzagPlain-inv}, \ref{ex:zigzagPlain-cov}-\ref{ex:zigzag-splitjoin}, \ref{ex:zigzagPlain-corners}, cont')\label{ex:zigzagPlain-obscorners}
Consider the program and its invariant. There is a single critical $\alpha_1$-corner, namely 
 $\Lambda$ of Example~\ref{ex:zigzagPlain-cov} that was shown split-joinable. There are zero observable critical $\alpha_2$-corners and $\alpha_3$-corners. Thus, the program is observably confluent.
\end{example}

\begin{definition}[critical $\beta$-corners]
\textbf{critical $\beta_1$-corners.}
Let $(\langle A, \true \rangle \where M)$ be a consistent meta-level state, $(\langle H_1 \uplus H_2 \uplus S, \true \rangle \where M)$ a proper meta-level state with $M{=}\willsucced_H G \land \cfreshvars(L,H_1 \uplus H_2 \uplus S) \land \cequiv(\langle A, \true \rangle,  \langle H_1 \uplus H_2 \uplus S, \true \rangle)$, and $\nonamerule {H_1}{H_2}GC$ a meta-level rule application with local variables $L$ and head variables  $H$.  
A critical $\beta_1$-corner is defined as
$$\langle A, \true \rangle 
\approxmetaprolog 
\langle H_1 \uplus H_2 \uplus S, \true \rangle 
\;\ourlongmapstometaprolog\;
\langle H_1 \uplus C \uplus S, G \rangle 
\qquad \where M.$$
\textbf{critical $\beta_2$-corners.}
Let $(\langle A, \true \rangle \where M)$ and $(\langle \{b\} \uplus S, \true \rangle \where M)$ be meta-level states where 
$M= \cfreshvars(L,H_1 \uplus H_2 \uplus S) \land \cequiv(\langle A, \true \rangle, \break \langle H_1 \uplus H_2 \uplus S, \true \rangle)$ and
$b$ is a built-in template.
A critical $\beta_2$-corner is defined as
$$\langle A, \true \rangle 
\approxmetaprolog 
\langle \{b\} \uplus S, \true \rangle 
\;\ourlongmapstometaprolog\;
\langle S, b \rangle 
\qquad \where M.$$
\end{definition}
\begin{definition}
Let $(\Sigma_1 \approxmetaprolog \Sigma_0 \Mapsto \Sigma_2) \where M$ be a critical $\beta_i$-corner for $i\in\{1,2\}$.
A \emph{critical observable $\beta_i$-corner} is defined as $(\Sigma_1 \approxmetaprolog  \Sigma_0 \Mapsto  \Sigma_2 ) \where M \land \cinv(\Sigma_0) \land \cinv(\Sigma_1)$.
\end{definition}
Note that, by Proposition~\ref{prop:invmeta}, it follows that $\cinv(\Sigma_2)$.

\begin{lemma}[Critical Corner Lemma]\label{lm:crit-corner}
Any observable object level corner $s_1 \approxprolog s_0 \rightarrowsubprolog s_2$ (resp.\  $s_1 \leftarrowsubprolog s_0 \rightarrowsubprolog s_2$) is either joinable modulo $\approxprolog$, or it is covered by an observable critical $\alpha$-corner (resp.\  $\beta$-corner).
\end{lemma}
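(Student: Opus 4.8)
The plan is to prove the lemma by a case analysis on the shape of the given observable object corner, constructing in each case an explicit covering critical corner by lifting the rules, built-ins and context involved and equipping it with the modal, freshness, invariant (and, for $\beta$-corners, equivalence) constraints, so that the original object corner lies in its denotation. I would treat the $\alpha$-corners $s_1 \leftarrowsubprolog s_0 \rightarrowsubprolog s_2$ and the $\beta$-corners $s_1 \approxprolog s_0 \rightarrowsubprolog s_2$ separately, and within each, split on which sort of object transition (by rule application or by built-in) produced each step.

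For the $\alpha$-corners there are three sub-cases. When both steps are built-in steps, $s_0 = \{b,b'\} \uplus S'$, and I would lift $b$, $b'$ and the context to obtain a critical $\alpha_3$-corner covering it. When one step is a rule application and the other a built-in step, $s_0 = H_1 \uplus H_2 \uplus \{b\} \uplus S'$, and lifting the matched head, the built-in and the context yields a critical $\alpha_2$-corner. The genuinely interesting sub-case is rule/rule. Here I would compare the two matched heads inside $s_0$: if they share a constraint that at least one of the rules removes, so that, after relabelling the symmetric wings, $(H_1\uplus H_2)\cap H'_2\neq\emptyset$, I lift the merged head $A=(H_1\uplus H_2)\cup(H'_1\uplus H'_2)$ together with the context to produce a critical $\alpha_1$-corner covering the corner. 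Otherwise the two redexes are independent, and I argue directly that the corner is joinable. The $\beta$-corners are handled uniformly: I lift the whole equivalent wing $s_1$ to $A$, lift the transition and its context on the $s_0$ side, and record the equivalence with a $\cequiv$ constraint, obtaining a critical $\beta_1$-corner (rule step) or $\beta_2$-corner (built-in step).

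In every lifting sub-case the covering is witnessed by a single grounding substitution $\sigma$ obtained from the object match: $\sigma$ sends the fresh meta-variables of the lifted rules and context to the corresponding object terms, so that the object corner equals $\denotes{\Lambda}^\sigma$. I then check $\sigma\in[M]$: the modal constraints $\willsucced_H G$ hold because the object guards succeeded without binding head variables, which is exactly the object-level rule-application condition invoked through Lemma~\ref{lm:cover2}; the $\cfreshvars$ constraints hold because the lifted local variables are chosen fresh; and $\cinv(\Sigma_0)$ holds because the corner is observable, i.e.\ $s_0\in\Isubprolog$. For $\beta$-corners the $\cequiv$ constraint holds since $s_1\approxprolog s_0$, and $\cinv(\Sigma_1)$ holds because $s_1$ is an $\Isubprolog$-state as well. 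This places the object corner in the denotation of an observable critical corner, as required.

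The main obstacle is the non-overlapping rule/rule sub-case, where joinability must be shown directly. The argument rests on the fact that a rule-application substitution $\exe(G)$ binds only the fresh local variables of its rule and never the head variables, so it acts as the identity on the other redex and on the context. Consequently, when the two matched heads share no removed constraint, applying either rule leaves the other rule's matched constraints and the success of its guard intact, so the two steps commute and reach a common state. Making this precise requires careful multiset bookkeeping of the merged head $A$ and a verification that the two guard substitutions, acting on disjoint fresh local variables, compose to the same result regardless of order.
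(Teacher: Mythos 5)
Your proposal is correct and follows essentially the same route as the paper's proof: a case split on the transition sorts of the two wings, with the conflicting rule/rule overlaps (up to the symmetric relabelling, which the paper treats as its explicit third sub-case) lifted to critical $\alpha_1$-corners, the remaining combinations lifted to $\alpha_2$-, $\alpha_3$-, $\beta_1$- and $\beta_2$-corners, and the non-conflicting rule/rule case argued joinable by commutation. Your additional detail on the witnessing grounding substitution $\sigma\in[M]$ and on why $\exe(G)$ leaves the other redex untouched only makes explicit what the paper leaves implicit.
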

\begin{proof}
We will go through the different sorts of observable corners and show that either they are not covered by a corner and joinable, or they are covered by some observable critical corner.
In the following we let $\stackrel{r}{\mapsto}$ refer to a transition by a rule
application being an instance of rule $r$, and $\stackrel{b}{\mapsto}$ refer to
a transition by built-in.
\begin{description}
\item[$ s_1 \stackrel{r}{\leftarrowsubprolog} s_0 \stackrel{r'}{\rightarrowsubprolog} s_2$]
This is possible in three ways. 
\begin{inparaenum}[(1)] \item There is no overlap between the heads of the rule applications, thus, the rule applications commute. 
\item The rule applications $r= \nonamerule{h_1}{h_2}{g}{c}$ and $r'= \nonamerule{h'_1}{h'_2}{g'}{c'}$ have an overlap $(h_1 \uplus h_2) \cap h'_2 \neq \emptyset$. Hence, we can construct an observable critical $\alpha_1$-corner $\Sigma_1 \Mapsfrom \Sigma_0 \Mapsto \Sigma_2 \where M$ such that $(s_1 \stackrel{r}{\leftarrowsubprolog} s_0 \stackrel{r'}{\rightarrowsubprolog} s_2) \in \denotes{\Sigma_1 \Mapsfrom \Sigma_0 \Mapsto \Sigma_2 \where M}$.
\item There is no overlap $(H_1 \uplus H_2) \cap H'_2 = \emptyset$ but there is one by $(H'_1 \uplus H'_2) \cap H_2 \neq \emptyset$; in this case Definition~\ref{def:crit-alphacorner} states that there exists an observable critical $\alpha_1$-corner $\Sigma_1 \stackrel{r'}{\leftarrowsubprolog} \Sigma_0 \stackrel{r}{\rightarrowsubprolog} \Sigma_2$ covering the corner
$s_2 \stackrel{r'}{\leftarrowsubprolog} s_0 \stackrel{r}{\rightarrowsubprolog} s_1$, thus, by symmetry it also covers this corner.  
\end{inparaenum} 
\item[$ s_1 \stackrel{r}{\mapsfrom} s_0 \stackrel{b}{\mapsto} s_2$] 
  Refer to the corner in question as $\lambda$, and let $\Lambda$ be a the critical meta-level corner constructed from
  $r$ (Definition~\ref{def:crit-obs-alphacorner}) and the built-in template of the built-in predicate in $b$.
  It is straightforward to show that $\lambda\in\denotesmetaprolog\Lambda$; thus
  there are no such object level $\alpha_2$ corner that is not covered by  some critical $\alpha_2$ meta-level corner.
\item[$(s_1 \stackrel{b}{\mapsfrom} s_0 \stackrel{b}{\mapsto} s_2), (s_1 \approxprolog s_0 \stackrel{r}{\mapsto} s_2), (s_1 \approxprolog s_0 \stackrel{b}{\mapsto} s_2)$] The proofs are similar in structure to the previous and is left out. \qed
\end{description}\end{proof}

\begin{proposition}\label{prop:critjoincoverjoin}
An observable critical corner is split-joinable modulo $\approxmetaprolog$ if and only if every corner covered by it is joinable modulo $\approxprolog$.
\end{proposition}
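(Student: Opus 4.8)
The plan is to observe that an observable critical corner is, in particular, an ordinary meta-level corner: it is a constrained meta-level term of the shape $(\Sigma_1 \Mapsfrom \Sigma_0 \Mapsto \Sigma_2)\where M$ (or its $\approxmetaprolog$-variant in the $\beta$ case), differing from a generic meta-level corner only in the special form imposed on $\Sigma_0, \Sigma_1, \Sigma_2$ and in the extra conjunct $\cinv(\Sigma_0)$ inside $M$. Consequently Proposition~\ref{prop:meta-corner} applies to it verbatim, and the present statement is simply its specialization to observable critical corners. The substance of the equivalence has therefore already been discharged, and I would organize the proof as a direct appeal to that proposition together with a check of the two directions against the relevant definitions.

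For the ``only if'' direction I would take a splitting $\{\Lambda_i\}_{i\in\mathit{Inx}}$ of the observable critical corner $\Lambda$ into pieces that are joinable modulo $\approxmetaprolog$, as furnished by Definition~\ref{def:splitjoin}. Since $\denotes{\Lambda}=\bigcup_{i\in\mathit{Inx}}\denotes{\Lambda_i}$, every object corner covered by $\Lambda$ is covered by some $\Lambda_i$, and Proposition~\ref{prop:coverjoinable} then delivers its joinability modulo $\approxprolog$. For the ``if'' direction I would reuse the construction from the proof of Proposition~\ref{prop:meta-corner}: index the covered object corners, lift each to a ground meta-level corner covering exactly that one object corner, and take this collection as a splitting of $\Lambda$. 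Each piece is joinable modulo $\approxmetaprolog$ because it covers a single object corner that is joinable modulo $\approxprolog$, so $\Lambda$ is split-joinable by Definition~\ref{def:splitjoin}.

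The only points where the adjective \emph{observable} (rather than plain meta-level) enters are bookkeeping. First, the object corners covered by an observable critical corner are automatically observable: the constraint $M$ contains $\cinv(\Sigma_0)$ (and, in the $\beta$ case, $\cinv(\Sigma_1)$ as well), so the covering function can only yield corners whose source state is an $I$-state, and by Proposition~\ref{prop:invmeta} the remaining wing states are $I$-states too; hence ``every corner covered by it'' already coincides with ``every \emph{observable} object corner covered by it,'' and no separate restriction is required. Second, each split $mt\where M\land M_i$ inherits the full constraint $M$, so the pieces of any splitting remain observable critical corners. I expect no genuine obstacle: the nontrivial content is entirely contained in Proposition~\ref{prop:meta-corner}, and the present proof amounts to noting that its hypotheses and conclusion transfer unchanged to the invariant-restricted setting.
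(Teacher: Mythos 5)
Your proposal is correct and takes essentially the same route as the paper, which dispatches this proposition in one line as ``a special case of Proposition~\ref{prop:meta-corner}''; your additional bookkeeping about why the covered corners are automatically observable (via $\cinv(\Sigma_0)$ in $M$ and Proposition~\ref{prop:invmeta}) is a sound elaboration of what the paper leaves implicit.
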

\begin{proof}
This is a special case of Proposition~\ref{prop:meta-corner}. \qed
\end{proof}

\begin{theorem}[Critical Corner Theorem]\label{thm:crit-corner}
A program is observable locally confluent modulo $\approxprolog$ if and only if its observable critical $\alpha$- and $\beta$-corners are split-joinable modulo $\approxmetaprolog$.
\end{theorem}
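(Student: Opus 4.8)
The plan is to derive the theorem directly from the two results that immediately precede it, the Critical Corner Lemma (Lemma~\ref{lm:crit-corner}) and Proposition~\ref{prop:critjoincoverjoin}, treating the two directions of the biconditional separately. No fresh case analysis over the corner sorts $\alpha_1$, $\alpha_2$, $\alpha_3$, $\beta_1$, $\beta_2$ is needed, since that analysis has already been absorbed into the Critical Corner Lemma; the argument here is essentially a composition of these two statements, and the only genuine work is the bookkeeping of the observability side conditions so that the object corners produced by covering are themselves observable, with $\alpha$/$\beta$ sorts matched on both sides.

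For the ``if'' direction I would assume that every observable critical $\alpha$- and $\beta$-corner is split-joinable modulo $\approxmetaprolog$, and take an arbitrary observable object corner $\lambda$ --- either an $\alpha$-corner $s_1 \leftarrowsubprolog s_0 \rightarrowsubprolog s_2$ or a $\beta$-corner $s_1 \approxprolog s_0 \rightarrowsubprolog s_2$. By Lemma~\ref{lm:crit-corner}, $\lambda$ is either already joinable modulo $\approxprolog$, in which case there is nothing to prove, or it is covered by an observable critical corner $\Lambda$ of the same sort. In the latter case the hypothesis gives that $\Lambda$ is split-joinable modulo $\approxmetaprolog$, and Proposition~\ref{prop:critjoincoverjoin} then yields that every object corner covered by $\Lambda$ is joinable modulo $\approxprolog$; in particular $\lambda$ is. Hence every observable corner is joinable modulo $\approxprolog$, which is exactly observable local confluence modulo $\approxprolog$.

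For the ``only if'' direction I would assume observable local confluence modulo $\approxprolog$ and let $\Lambda$ be an arbitrary observable critical $\alpha$- or $\beta$-corner. By Proposition~\ref{prop:critjoincoverjoin} it suffices to show that every object corner covered by $\Lambda$ is joinable modulo $\approxprolog$. Each covered corner is an object corner whose source state lies in $\Iprolog$, because $\Lambda$ carries the side condition $\cinv(\Sigma_0)$ (and, in the $\beta$-case, also $\cinv(\Sigma_1)$, so that the equivalent wing is observable as well); by Proposition~\ref{prop:invmeta}, the remaining wing state reached along $\Mapsto$ also lies in $\Iprolog$, so the covered corner is indeed observable. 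The hypothesis then makes each covered corner joinable modulo $\approxprolog$, and Proposition~\ref{prop:critjoincoverjoin} converts this back into split-joinability of $\Lambda$ modulo $\approxmetaprolog$.

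The step I expect to require the most care is precisely this observability bookkeeping in the ``only if'' direction: one must verify that covering an observable critical corner produces \emph{only} observable object corners, which relies on the $\cinv$ side conditions built into Definition~\ref{def:crit-obs-alphacorner} and the corresponding observable $\beta$-corner definition, together with invariant preservation (Proposition~\ref{prop:invmeta}). Once this is in place, the Critical Corner Lemma reduces arbitrary observable object corners to covered critical corners, and Proposition~\ref{prop:critjoincoverjoin} translates split-joinability at the meta-level into joinability at the object level, so the two directions close cleanly.
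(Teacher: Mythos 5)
Your proposal is correct and follows essentially the same route as the paper: both directions are obtained by composing the Critical Corner Lemma (Lemma~\ref{lm:crit-corner}) with Proposition~\ref{prop:critjoincoverjoin}. The paper's own proof is just a terser version of yours; your added bookkeeping (checking via the $\cinv$ side conditions and Proposition~\ref{prop:invmeta} that corners covered by observable critical corners are themselves observable) makes explicit a step the paper leaves implicit, but does not change the argument.
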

\begin{proof}
$\Leftarrow$: Assume that the observable critical $\alpha$- and $\beta$-corners  are split-joinable modulo $\approxmetaprolog$; by the Critical Corner Lemma (Lemma~\ref{lm:crit-corner}) and Proposition~\ref{prop:critjoincoverjoin} we conclude that the program is observable locally confluent. $\Rightarrow$:
Assume that the program is observable locally confluent modulo $\approxprolog$, thus
all corners covered by observable critical corners are joinable modulo $\approxprolog$, thus, by Proposition~\ref{prop:critjoincoverjoin}, the observable critical corners are also split-joinable modulo $\approxmetaprolog$. \qed
\end{proof}
\begin{theorem}
An observable terminating CHR program is observably confluent modulo $\approxprolog$ if and only if its observable critical $\alpha$- and $\beta$-corners are split-joinable modulo $\approxmetaprolog$.
\end{theorem}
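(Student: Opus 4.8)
The plan is to obtain the statement by chaining the two reduction results already established in the paper, so that essentially no new argument is required beyond checking that their hypotheses line up. First I would introduce $\Pi$ as the given observably terminating CHR program and invoke the observable generalization of Newman's and Huet's lemmas (the unlabelled lemma following Lemmas~\ref{lm:newman} and~\ref{lm:huet-mod-eq}). Instantiating that abstract result with the object-level CHR transition system, the invariant $\Iprolog$, and the equivalence $\approxprolog$, observable termination of $\Pi$ gives that the system is observably confluent modulo $\approxprolog$ if and only if it is observably locally confluent modulo $\approxprolog$. This is the only step in which the termination hypothesis is consumed.

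Second, I would apply the Critical Corner Theorem (Theorem~\ref{thm:crit-corner}), which characterizes observable local confluence modulo $\approxprolog$ as exactly split-joinability modulo $\approxmetaprolog$ of the observable critical $\alpha$- and $\beta$-corners. Composing the two biconditionals then yields the chain: $\Pi$ is observably confluent modulo $\approxprolog$, if and only if $\Pi$ is observably locally confluent modulo $\approxprolog$, if and only if its observable critical $\alpha$- and $\beta$-corners are split-joinable modulo $\approxmetaprolog$. The first equivalence is the observable Huet lemma and the second is Theorem~\ref{thm:crit-corner}, so the conclusion is immediate.

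The only point I would check carefully, and where the small remaining obstacle lies, is that the two cited results speak about the same underlying object-level system, the same invariant, and the same equivalence, and that their composition is logically valid. The abstract lemma is phrased for an $I$-restricted system with an arbitrary equivalence $\sim$, so I must record that the intended instance is the Prolog-based CHR system of Section~\ref{sec:Prolog-based-CHR} together with $\Iprolog$ and $\approxprolog$; and the observable critical corners are understood in the sense of Definition~\ref{def:crit-obs-alphacorner} and its $\beta$-counterpart. Crucially, Theorem~\ref{thm:crit-corner} is a characterization of local confluence alone and carries no termination assumption, so its biconditional remains available regardless of termination; termination is used solely in passing from the global to the local property. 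Hence the chaining is sound, and the proof reduces to a one-line composition of the observable Huet lemma with Theorem~\ref{thm:crit-corner}.
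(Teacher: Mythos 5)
Your proposal is correct and matches the paper's own proof, which likewise obtains the theorem as a direct consequence of Theorem~\ref{thm:crit-corner} combined with Lemma~\ref{lm:huet-mod-eq} generalized under invariant (i.e., the observable Huet lemma). Your additional care in checking that both cited results refer to the same system, invariant, and equivalence is sound but adds nothing beyond what the paper's one-line composition already relies on.
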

\begin{proof}
A direct consequence of Theorem~\ref{thm:crit-corner} and Lemma~\ref{lm:huet-mod-eq} generalized under invariant, i.e.,\ an observable terminating system is observably confluent modulo $\approxprolog$ if and only if it is observable locally confluent modulo $\approxprolog$. \qed
\end{proof}
When showing joinability (modulo equivalence) of the observable critical corners we will leave out duplicates and corners with similar wing-states.
\begin{example}[\cite{HenningMajaLOPSTR18}](Example~\ref{ex:set} cont') 
Consider the one-rule \texttt{set}-program  together with the invariant and equivalence from Example~\ref{ex:set-inv-eq}.
There are two critical $\alpha$-corners, given by the two ways, the rule can overlap with itself. \vspace{-0.5cm} \begin{center}
\tikzset{|/.tip={Bar[width=.8ex,round]}}
\noindent\begin{tikzpicture}[|->, auto,node distance=2cm and 1.0cm, 
               main node/.style={font=\normalsize 
              }]
  \node[main node] (1) {$\langle\{\texttt{item($x_1$)},\texttt{set($L_1$)},\texttt{item($x_2$)}\}, \true\rangle$};
  \node[main node] (2) [above =0.1cm of 1] {$\langle \{\texttt{set([$x_1$|$L_1$])},\texttt{item($x_2$)}\}, \true\rangle $};
  \node[main node] (3) [below =0.1cm of 1] {$\langle \{\texttt{item($x_1$)},\texttt{set([$x_2$|$L_1$])}\}, \true\rangle$};
  \path
    (1) edge [|-Implies, line width=0.4pt, double distance=1.5pt, shorten >=-1pt, shorten >=-3pt, shorten <=-2pt]  
(2)
    (1) edge [|-Implies, line width=0.4pt, double distance=1.5pt, shorten >=-1pt, shorten >=-3pt, shorten <=-2pt] 
 (3);
 
\begin{scope}[xshift = 6.2 cm]
  \node[main node] (1) {$\langle\{\texttt{set($L_1$)},\texttt{item($x_1$)},\texttt{set($L_2$)}\}, \true\rangle$};
  \node[main node] (2) [above =0.1cm of 1] {$\langle\{\texttt{set([$x_1$|$L_1$])},\texttt{set($L_2$)}\}, \true\rangle$};
  \node[main node] (3) [below =0.1cm of 1] {$\langle\{\texttt{set($L_1$)},\texttt{set([$x_1$|$L_2$])}\}, \true\rangle$};
  \path
    (1) edge [|-Implies, line width=0.4pt, double distance=1.5pt, shorten >=-1pt, shorten >=-3pt, shorten <=-2pt] 
     (2)
    (1) edge [|-Implies, line width=0.4pt, double distance=1.5pt, shorten >=-1pt, shorten >=-3pt, shorten <=-2pt] 
   (3);
\end{scope}
\end{tikzpicture}
\end{center}\vspace{-0.3cm} 
Each corner is assumed encapsulated in ``$\small\where M$'' with $\small M = 
\texttt{type(const,$x_1$)}\land\texttt{type(const,$x_2$)}\land 
\texttt{type(constList,$L_1$)}{\land}
\texttt{type(constList,$L_2$)}{\land}\texttt{type(constItems,$S$)}$.
The corner on the right is inconsistent due to the two \texttt{set}-constraints and the corner on the left is joinable modulo the equivalence as follows.\vspace{-0.1cm} 
\begin{center}
{\tikzset{|/.tip={Bar[width=.8ex,round]}}

\noindent\begin{tikzpicture}[|->, auto,node distance=2cm and 1.0cm,
               main node/.style={font=             
              }]

  \node[main node] (1) {$\langle\{\texttt{item($x_1$)}, \texttt{set($L$)}, \texttt{item($x_2$)}\}{\uplus}S,\true\rangle$};
  \node[main node] (2) [below left=0.1cm and -3.1cm of 1] {$\langle\{\texttt{set([$x_1$|$L$])}, \texttt{item($x_2$)}\}{\uplus}S,\true\rangle$};
  \node[main node] (3) [below right=0.1cm and -3.1cm of 1] {$\langle\{\texttt{item($x_1$)},\texttt{set([$x_2$|$L$])},\}{\uplus}S,\true\rangle$};
  \node[main node] (2A) [below =0.1cm of 2] {$\langle\{\texttt{set([$x_2$,$x_1$|$L$])}\},\true{\uplus}S\rangle$};
  \node[main node] (3A) [below =0.1cm of 3] {$\langle\{\texttt{set([$x_1$,$x_2$|$L$])}\}{\uplus}S,\true\rangle$};
  \path
    (1) edge [|-Implies, line width=0.4pt, double distance=1.5pt, shorten >=-1pt, shorten >=-3pt, shorten <=-2pt]  (2)
    (2) edge [densely dashed, |-Implies, line width=0.4pt, double distance=1.5pt, shorten >=-1pt, shorten >=-3pt, shorten <=-4pt]  (2A)
    (1) edge [|-Implies, line width=0.4pt, double distance=1.5pt, shorten >=-1pt, shorten >=-3pt, shorten <=-2pt]  (3) 
    (3) edge [densely dashed, |-Implies, line width=0.4pt, double distance=1.5pt, shorten >=-1pt, shorten >=-3pt, shorten <=-4pt]  (3A);
\draw[-, dash pattern=on 2pt off 1pt, transform canvas={yshift=1pt},decorate, decoration={snake, segment length=7pt, amplitude=0.3mm}] (2A) -- (3A);
\draw[-, dash pattern=on 2pt off 1pt, transform canvas={yshift=-1pt},decorate, decoration={snake, segment length=7pt, amplitude=0.3mm}] (2A) -- (3A);
\end{tikzpicture}
}
\end{center}\vspace{-0.3cm}
In addition, there is one $\beta$-corner (assumed encapsulated in ``$\small\where M \land \break \texttt{perm($L_1$,$L_2$)}$'') which is joinable modulo $\approxmetaprolog$ as follows.
\vspace{-0.2cm}\begin{center}
{\tikzset{|/.tip={Bar[width=.8ex,round]}}
\usetikzlibrary{decorations.pathmorphing}
\noindent\begin{tikzpicture}[|->, auto,node distance=2cm and 1.0cm,
               main node/.style={font= ,
               minimum width={5cm}, align=center        
              }]

  \node[main node] (1) {$\langle\{\texttt{item($x$)}, \texttt{set($L_1$)}\} \uplus S,\true\rangle$};
  \node[main node] (2) [below left=0.1cm and -2.5cm of 1] {$\langle\{\texttt{item($x$)}, \texttt{set($L_2$)}\}\uplus S,\true\rangle$};
  \node[main node] (3) [below right=0.1cm and -2.5cm of 1] {$\langle\{\texttt{set([$x$|$L_1$])}\} \uplus S,\true\rangle$};
    \node[main node] (4) [below right=0.1cm and -2.5cm of 2] {$\langle\{\texttt{set([$x$|$L_2$])}\} \uplus S,\true\rangle$};

\draw[-,  transform canvas={yshift=1pt,xshift=-0.4pt},decorate, decoration={snake, segment length=7pt, amplitude=0.3mm}] (1) -- (2); 
\draw[-,  transform canvas={yshift=-1pt},decorate, decoration={snake, segment length=7pt, amplitude=0.3mm}] (1) -- (2);     
   \path   
    (2) edge [densely dashed,|-Implies, line width=0.4pt, double distance=1.5pt, shorten >=-1pt, shorten >=-3pt, shorten <=-2pt]  (4)
    (1) edge [|-Implies, line width=0.4pt, double distance=1.5pt, shorten >=-1pt, shorten >=-3pt, shorten <=-2pt]  (3) ;

\draw[-, dash pattern=on 2pt off 1pt, transform canvas={yshift=1pt,xshift=-0.25pt},decorate, decoration={snake, segment length=7pt, amplitude=0.3mm}] (3) -- (4); 
\draw[-, dash pattern=on 2pt off 1pt, transform canvas={yshift=-1pt},decorate, decoration={snake, segment length=7pt, amplitude=0.3mm}] (3) -- (4); 
 
\end{tikzpicture}
}\end{center}\vspace{-0.3cm}
\noindent 
Thus this proves the program observably locally confluent.
\end{example}

\begin{example}
Consider the \texttt{gcd}-program together with the invariant ``states  with \texttt{gcd}-constraints with positive integers''; we assume a type \texttt{posGcd} for sets of constraints of the form \texttt{gcd($n$)}
where $n$ is a positive integer. 
In the following we indicate the five corners (leaving out one duplicate),
with the proof of joinability hinted for the first one.
\begin{center} \vspace{-0.4cm}
{\tikzset{|/.tip={Bar[width=.8ex,round]}}
\noindent\begin{tikzpicture}[|->, auto,node distance=2cm and 1.0cm,
               main node/.style={font=
              }]
  \node[main node] (1) {$\langle\{\texttt{gcd($n$)}, \texttt{gcd($n$)}, \texttt{gcd($m$)}\}{\uplus}S,\true\rangle$};
  \node[main node] (2) [below left= 0.2 cm and -3.0 cm of 1]
    {$\langle\{\texttt{gcd($n$)}, \texttt{gcd($m$)}\}{\uplus}S,\true\rangle  $};
  \node[main node] (3) [below right= 0.2 cm and -3.0 cm of 1]
  {$\langle\{\texttt{gcd($n$)}, \texttt{gcd($n$)}, \texttt{gcd($l$)}\}{\uplus}S,(n\texttt{<}m, l \,\texttt{is}\, m\texttt{-}n)\rangle$};
  \node[main node] (1A) [below right= 0.2cm and -4.65 cm of 2]
    {$\langle\!\{\texttt{\scalebox{.9}[1.0]{gcd}($n$)}, \texttt{\scalebox{.9}[1.0]{gcd}($l'$)}\}{\uplus}S,(n\texttt{<} m, l' \texttt{is}\, m\texttt{-}n)\!\rangle $};    
\node[main node] (1B) 
[below left= 0.2 cm and -7.8 cm of 3]
    {$\langle\! \{\texttt{\scalebox{.9}[1.0]{gcd}($n$)}, \texttt{\scalebox{.9}[1.0]{gcd}($l$)}\}{\uplus}S,(n\texttt{<} m, l \,\texttt{is}\, m\texttt{-}n)\!\rangle$};      
  \path
    (1.190) edge [|-Implies, line width=0.4pt, double distance=1.5pt, shorten >=-1pt, shorten >=-3pt, shorten <=-2pt] node[left=.2cm ] {$\scriptstyle {r_1}$}
    (2)
    (1.350) edge [|-Implies, line width=0.4pt, double distance=1.5pt, shorten >=-1pt, shorten >= -3pt, shorten <=-2pt] node[right=.3cm ] {$\scriptstyle {r_2}$} 
    (3.170.5);
      \path
    (2) edge [|-Implies, densely dashed, line width=0.4pt, double distance=1.5pt, shorten >=-1pt, shorten >=-3pt, shorten <=-2pt] 
    node[left=.3cm ] {$\scriptstyle {r_2}$} 
    (1A)
    (3.349) edge [|-Implies, densely dashed, line width=0.4pt, double distance=1.5pt, shorten >=-1pt, shorten >= -3pt, shorten <=-2pt] 
    node[right=.3cm ] {$\scriptstyle {r_1}$}  
    (1B);
   
    \draw[-, dash pattern=on 2pt off 1pt, transform canvas={yshift=-1pt},decorate, decoration={snake, segment length=7pt, amplitude=0.3mm}, shorten >= -1pt, shorten <=-1pt] (1A) -- (1B); 
\draw[-, dash pattern=on 2pt off 1pt, transform canvas={yshift=1pt},decorate, decoration={snake, segment length=7pt, amplitude=0.3mm}, shorten >= -1pt, shorten <=-1pt] (1A) -- (1B);
\end{tikzpicture}
}
\end{center}\vspace{-0.2cm}
This corner is assumed encapsulated in
``$\small\where \willsucced_{[n,m]}(n{<} m,l{\,\texttt{is}\,} m\,\texttt{-}\,n) \land \break \cfreshvars(l,\{\texttt{gcd($n$)}, \texttt{gcd($m$)}\} \uplus S) \land    
\texttt{type(posGcd,$\{\texttt{gcd($n$)}, \texttt{gcd($m$)}\} \uplus S$)}$''.
For the proof joinability, i.e., the lower part of the diagram, we need to extend the meta-level constraint with
$\cfreshvars(l',\{\texttt{gcd($n$)}, \texttt{gcd($m$)}\}\uplus S)$, which we can do without 
changing the set of covered CHR states. 
For reasons of space, we indicate the four remaining corners by their common ancestor state; the corners can be determined based on their meta constraints.

{\small
\smallskip\noindent\qquad
$\langle\{\texttt{gcd($n$)}, \texttt{gcd($m$)}, \texttt{gcd($m$)}\}{\uplus}S,\true\rangle\where M
$

\smallskip\noindent\quad\qquad
where $M$ is the same meta-level constraint as in the first corner above.

\smallskip\noindent\qquad
$\langle\{\texttt{gcd($n$)}, \texttt{gcd($m_1$)}, \texttt{gcd($m_2$)}\}{\uplus}S,\true\rangle\where$

\noindent\qquad\qquad$\willsucced_{[n,m_1]}(n{<} m_1,l_1{\,\texttt{is}\,} m_1\,\texttt{-}\,n)\land\willsucced_{[n,m_2]}(n {<} m_2,l_2{\,\texttt{is}\,} m_2\,\texttt{-}\,n)\land$

\noindent\qquad\qquad$\cfreshvars(l_1,l_2, \{\texttt{gcd($n$)}, \texttt{gcd($m_1$)}, \texttt{gcd($m_2$)}\}\uplus S\texttt{)} \land$

\noindent\qquad\qquad$   \texttt{type(posGcd,}\{\texttt{gcd($n$)}, \texttt{gcd($m_1$)}, \texttt{gcd($m_2$)}\}\uplus S\texttt{)}$


\medskip\noindent\qquad
$ \langle\{\texttt{gcd($n_1$)}, \texttt{gcd($n_2$)}, \texttt{gcd($n_3$)}\}{\uplus}S,\true\rangle\where$

\noindent\qquad\qquad$\willsucced_{[n_1,n_2]}(n_1{<} n_2,l_2{\,\texttt{is}\,} n_2\,\texttt{-}\,n_1)\land\willsucced_{[n_2,n_3]}(n_2 {<} n_3,l_3{\,\texttt{is}\,} n_3\,\texttt{-}\,n_2)\land$

\noindent\qquad\qquad$ \cfreshvars(l_2,l_3, \{\texttt{gcd($n_1$)}, \texttt{gcd($n_2$)}, \texttt{gcd($n_3$)}\}\uplus S\texttt{)} \land  $

\noindent\qquad\qquad$ \texttt{type(posGcd,}\{\texttt{gcd($n_1$)}, \texttt{gcd($n_2$)}, \texttt{gcd($n_3$)}\}\uplus S\texttt{)}$


\medskip\noindent\qquad
$ \langle\{\texttt{gcd($n_1$)}, \texttt{gcd($l_1$)}, \texttt{gcd($n_2$)}\}{\uplus}S,(n_1{<} m, l_1 \,\texttt{is}\, m\,\texttt{-}\,n_1)\rangle\where$

\noindent\qquad\qquad$\willsucced_{[n_1,m]}(n_1{<} m,l_1{\,\texttt{is}\,} m\,\texttt{-}\,n_1)\land\willsucced_{[n_2,m]}(n_2 {<} m,l_2{\,\texttt{is}\,} m\,\texttt{-}\,n_2)\land$

\noindent\qquad\qquad$ \cfreshvars(l_1,l_2, \{\texttt{gcd($n_1$)}, \texttt{gcd($n_2$)}, \texttt{gcd($m$)}\}\uplus S\texttt{)} \land  $

\noindent\qquad\qquad$ \texttt{type(posGcd,}\{\texttt{gcd($n_1$)}, \texttt{gcd($n_2$)}, \texttt{gcd($m$)}\}\uplus S\texttt{)}$


}

\end{example}

\section{Conclusion}\label{sec:conclude}
The work presented here is part of a research project aiming to provide
automatic or semi-automatic tools for proving confluence modulo equivalence of CHR programs.
We described a meta-level language
and used it to define a meta-level transition system, able to simulate infinitely many CHR transitions by a single
meta-level transitions.
Our approach is the first to obtain this in the context of invariants and modulo equivalence.
Furthermore, we stepped  
from a theoretically interesting semantics, to one describing CHR as it is implemented and used.
Instead of assuming termination, we proposed the weaker property observable termination and demonstrated it useful. Methods for proving termination of CHR programs have been studied, e.g.,~\cite{DBLP:conf/compulog/Fruhwirth99,DBLP:conf/iclp/PilozziS09a,DBLP:conf/lopstr/PilozziS11}, but they still need to be adapted for observable termination.

A constraint solver for the meta-level language
is under development in CHR as the obvious implementation language. Its main job is to reduce meta-level states or to identify when this is not possible, and to decide satisfiability of a conjunction of meta-level constraints.
Our current experiments indicate that when splitting is not necessary, we can implement proofs of joinability
similarly to existing confluence checkers for CHR, e.g.,~\cite{Raiser-Langbein2010}, referring to the mentioned constraint solver in each step.
When and how to split is an open challenge, and a mechanism is needed for propagating a proposed split backwards through a transition sequence.

\bibliographystyle{abbrv}

\bibliography{CHR}
\end{document}